\newcommand\myshade{70} 
\newcommand{\bea}{\begin{eqnarray}}
\newcommand{\eea}{\end{eqnarray}}
\newcommand{\bean}{\begin{eqnarray*}}
\newcommand{\eean}{\end{eqnarray*}}
\newcommand{\ceil}[1]{\left\lceil #1 \right\rceil}
\newcommand{\floor}[1]{\left\lfloor #1 \right\rfloor}
\newcommand{\sbinom}[2]{\left( \begin{array}{c} #1 \\ #2 \end{array} \right) }
\newcommand{\cA}{{\cal A}}
\newcommand{\cB}{{\cal B}}
\newcommand{\cD}{{\cal D}}
\newcommand{\cL}{{\cal L}}
\newcommand{\cM}{{\cal M}}
\newcommand{\cN}{{\cal N}}
\newcommand{\cS}{{\cal S}}
\newcommand{\cU}{{\cal U}}
\newcommand{\sG}{\script{G}}
\newcommand{\sP}{\script{P}}
\newcommand{\bfc}{{\boldsymbol c}}
\newcommand{\bfm}{{\boldsymbol m}}
\newcommand{\bfx}{{\boldsymbol x}}
\newcommand{\bvert}{\big\vert}
\DeclareMathOperator*{\E}{\mathbf{E}}
\DeclareMathAlphabet{\mathbfsl}{OT1}{cmr}{bx}{it}
\newcommand{\uuu}{\kern-1pt\mathbfsl{u}\kern-0.5pt}
\newcommand{\vvv}{\kern-1pt\mathbfsl{v}\kern-0.5pt}
\newcommand{\myboxplus}{\kern1pt\mbox{\small$\boxplus$}}
\makeatletter \DeclareRobustCommand{\sbinom}{\genfrac[]\z@{}}
\newcommand{\G}[2]{\sbinom{{#1}\kern-1pt}{{#2}\kern-1pt}}
\newcommand{\Gq}[2]{\sbinom{{#1}\kern-0.25pt}{{#2}\kern-0.25pt}}
\newcommand{\Fq}{\smash{{\mathbb F}_{\!q}}}
\newcommand{\Fqn}{\smash{{\mathbb F}_{\!q}^n}}
\newcommand{\Fqk}{\smash{{\mathbb F}_{\!q}^k}}
\newcommand{\Ps}{\smash{{\sP\kern-2.0pt}_q\kern-0.5pt(n)}}
\newcommand{\sPs}{\smash{{\sP\kern-1.5pt}_q(n)}}
\newcommand{\Ptwo}{\smash{{\sP\kern-2.0pt}_2\kern-0.5pt(n)}}
\newcommand{\Ptwom}{\smash{{\sP\kern-2.0pt}_2\kern-0.5pt(m)}}
\newcommand{\Ptwonm}{\smash{{\sP\kern-2.0pt}_2\kern-0.5pt(n+m)}}
\newcommand{\Ptwoa}{\smash{{\sP\kern-2.0pt}_2\kern-0.5pt(1)}}
\newcommand{\Ptwob}{\smash{{\sP\kern-2.0pt}_2\kern-0.5pt(2)}}
\newcommand{\Ptwoc}{\smash{{\sP\kern-2.0pt}_2\kern-0.5pt(3)}}
\newcommand{\Ptwod}{\smash{{\sP\kern-2.0pt}_2\kern-0.5pt(4)}}
\newcommand{\Ptwoe}{\smash{{\sP\kern-2.0pt}_2\kern-0.5pt(5)}}
\newcommand{\Ptwof}{\smash{{\sP\kern-2.0pt}_2\kern-0.5pt(6)}}
\newcommand{\Ptwokm}{\smash{{\sP\kern-2.0pt}_2\kern-0.5pt(2k-1)}}
\newcommand{\Pone}{\smash{{\sP\kern-2.5pt}_2\kern-0.5pt(n{-}1)}}
\newcommand{\Gr}{\smash{{\sG\kern-1.5pt}_q\kern-0.5pt(n,k)}}
\newcommand{\Gi}{\smash{{\sG\kern-1.5pt}_q\kern-0.5pt(n,i)}}
\newcommand{\Gj}{\smash{{\sG\kern-1.5pt}_q\kern-0.5pt(n,j)}}
\newcommand{\Grmk}{\smash{{\sG\kern-1.5pt}_q\kern-0.5pt(n,n-k)}}
\newcommand{\Grdk}{\smash{{\sG\kern-1.5pt}_q\kern-0.5pt(2k,k)}}
\newcommand{\Grekappa}{\smash{{\sG\kern-1.5pt}_q\kern-0.5pt(n,e+1-\kappa)}}
\newcommand{\Grtwoekappa}{\smash{{\sG\kern-1.5pt}_q\kern-0.5pt(n,2e+1-\kappa)}}
\newcommand{\Gremkappa}{\smash{{\sG\kern-1.5pt}_q\kern-0.5pt(n,e-\kappa)}}
\newcommand{\Gn}{\smash{{\sG\kern-1.5pt}_2\kern-0.5pt(n,n{-}1)}}
\newcommand{\Gnq}{\smash{{\sG\kern-1.5pt}_q\kern-0.5pt(n,n{-}1)}}
\newcommand{\Gone}{\smash{{\sG\kern-1.5pt}_2\kern-0.5pt(n,1)}}
\newcommand{\Gqone}{\smash{{\sG\kern-1.5pt}_q\kern-0.5pt(n,1)}}
\newcommand{\GTwo}{\smash{{\sG\kern-1.5pt}_2\kern-0.5pt(n,k)}}
\newcommand{\GTwonk}[2]{{\smash{{\sG\kern-1.5pt}_2\kern-0.5pt({#1},{#2})}}}
\newcommand{\Gnk}{\smash{{\sG\kern-1.5pt}_2\kern-0.5pt(n,n{-}k)}}
\newcommand{\Greone}{\smash{{\sG\kern-1.5pt}_q\kern-0.5pt(n,e{+}1)}}
\newcommand{\Gretwo}{\smash{{\sG\kern-1.5pt}_q\kern-0.5pt(n,e{+}2)}}
\newcommand{\be}[1]{\begin{equation}\label{#1}}
\newcommand{\ee}{\end{equation}}
\newtheorem{lemma}{Lemma}
\newtheorem{remark}{Remark}
\newtheorem{corollary}{Corollary}
\newtheorem{definition}{Definition}
\newtheorem{proposition}{Proposition}
\newtheorem{conjecture}{Conjecture}
\newcommand{\Hent}[1]{\mathbf{H}\left(#1\right)}
\newcommand{\Iinfo}[2]{\mathbf{I}\left(#1; #2\right)}
\newcommand{\condH}[2]{\mathbf{H}\left(#1 \mid #2\right)}
\newcommand{\condI}[3]{\mathbf{I}\left(#1; #2 \mid #3\right)}
  \newcommand*{\Initial}{I\mskip-2mu}
\newcommand*{\Final}{F\mskip-2mu}
\newcommand*{\In}{n^{\Initial}}
\newcommand*{\Fn}{n^{\Final}}
\newcommand*{\Ik}{k^{\Initial}}
\newcommand*{\Fk}{k^{\Final}}
\newcommand*{\IPart}{{\mathcal{P}}^\Initial}
\newcommand*{\FPart}{{\mathcal{P}}^\Final}
\newcommand*{\Ir}{{r^{\Initial}}}
\newcommand*{\Fr}{{r^{\Final}}}
\newcommand*{\Il}{{\lambda^{\Initial}}}
\newcommand*{\Fl}{{\lambda^{\Final}}}
\newcommand{\bw}{\gamma_{\mathrm{R}}}
\begin{document}

\author{
    \IEEEauthorblockN{\textbf{Shubhransh~Singhvi}\IEEEauthorrefmark{1}, \textbf{Saransh~Chopra}\IEEEauthorrefmark{1}, \textbf{K.V.~Rashmi}\IEEEauthorrefmark{1}\IEEEauthorrefmark{2}\thanks{This work was supported in part by the NSF CAREER Award under Grant 19434090 and in part by a Sloan Faculty Fellowship.}\\}
    \IEEEauthorblockA{\IEEEauthorrefmark{1}Carnegie Mellon University}
    \IEEEauthorblockA{\IEEEauthorrefmark{2}Google
    \\Email: shubhranshsinghvi2001@gmail.com, \{saranshc, rvinayak\}@cs.cmu.edu}
}

\title{Tight Lower Bounds on the Bandwidth Cost of MDS Convertible Codes in the Split Regime}
\date{\today}
 \maketitle
\begin{abstract}
Recent advances in erasure coding for distributed storage systems have demonstrated that adapting redundancy to varying disk failure rates can lead to substantial storage savings. Such adaptation requires \emph{code conversion}, wherein data encoded under an initial $[\Ik + \Ir, \Ik]$ code is transformed into data encoded under a final $[\Fk + \Fr, \Fk]$ code—an operation that can be resource-intensive. \emph{Convertible codes} are a class of codes designed to facilitate this transformation efficiently while preserving desirable properties such as the MDS property. In this work, we investigate the fundamental limits on the \emph{bandwidth cost} of conversion (total amount of data transferred between the storage nodes during conversion) for systematic MDS convertible codes. Specifically, we study the subclass of conversions known as the \emph{split regime} (a single initial codeword is converted into multiple final codewords).

In this setting, prior to this work, the best known lower bounds on the bandwidth cost of conversion for all parameters were derived by Maturana and Rashmi~\cite{maturana2022bandwidth} under certain uniformity assumptions on the number of symbols downloaded from each node. Further, these bounds were shown to be tight for the parameter regime where $\Fr \geq \Fk$ or $\Ir \leq \Fr$. In this work, we derive lower bounds on the bandwidth cost of systematic MDS convertible codes for all parameters in the split regime without the uniformity assumption. Moreover, our bounds are tight for the broader parameter regime where $\Fr \geq \Fk$ or $\Ir \leq \Fk$. Subsequently, our bounds also partially resolve the conjecture proposed in \cite{maturana2022bandwidth}. We employ a novel information-theoretic framework, which assumes only that the initial and final codes are systematic and does not rely on any linearity assumptions or the aforementioned uniformity assumptions.
\end{abstract}

\section{Introduction}
Distributed data storage systems (DSS) are at the core of modern computing infrastructure, providing data reliability, high availability, and efficient access to meet growing scalability demands. A critical enabler of these systems is the use of erasure codes, which store large amounts of data reliably while minimizing storage overhead~\cite{patterson1988case, ghemawat2003google, huang2012erasure}.
An $[n, k]$ \emph{erasure code} encodes $k$ data symbols into a \emph{codeword} consisting of $n$ symbols, which are distributed across multiple storage devices (nodes). Large-scale DSSs typically consist of several independent codewords distributed across different subsets of storage nodes.

Among erasure codes, maximum distance separable (MDS) codes are of particular interest. For a given $(n, k)$ configuration, MDS codes achieve the optimal trade-off between storage overhead and failure tolerance by tolerating up to $n-k$ node failures (erasures), the maximum possible due to the Singleton bound~\cite{singleton1964maximum}. Alternatively, an $(n, k)$-MDS coded DSS possesses the property that the entire data can be recovered from any $k$ out of the $n$ nodes. This property is referred to as the \emph{MDS property}.

Due to the inherent trade-off between storage overhead and failure tolerance, selecting the best possible $(n, k)$ configuration is crucial and must be carefully aligned with quality of service (QoS) requirements.
 Recent findings by Kadekodi et al.~\cite{kadekodi2019cluster} reveal that the failure rates of storage devices in a DSS vary over time and exhibit a “bathtub” shape with three distinct phases of life: failure rates are high during the “infancy” phase (1–3 months), low and stable during the “useful life” phase (3–5 years), and high again during the “wearout” phase (a few months before decommissioning). Their research highlights the importance of adapting the erasure code parameters of already encoded data to the changing reliability characteristics of storage devices. Additionally, adapting these parameters for data with varying access patterns, such as \emph{hot data}, which is frequently accessed, and \emph{cold data}, which is accessed infrequently, can significantly optimize performance and resource usage~\cite{kim2024morph}.

Whenever the parameters $n$ and $k$ are updated, all the data that is already encoded must be modified to conform to the newly chosen parameters. However, the \emph{default approach} of re-encoding all data is prohibitively expensive~\cite{maturana2022convertible}. To address these challenges, Francisco and Rashmi~\cite{maturana2022convertible} introduced the \emph{code conversion problem}, which provides a theoretical framework for efficiently updating the code parameters of already encoded data. Specifically, code conversion refers to the process of transforming multiple codewords of an initial code $\mathcal{C}^I$ with parameters $[\In, \Ik]$ into multiple codewords of a final code $\mathcal{C}^F$ with parameters $[\Fn, \Fk]$. Let $\Ir\;\triangleq\;\In - \Ik$ and $\Fr\;\triangleq\;\Fn - \Fk$ represent the number of redundant symbols (aka parity symbols) of the initial and final codewords, respectively.

The primary objective in the study of code conversion is to design the initial and final codes, along with a conversion procedure, that enables the transformation of encoded data more efficiently than the default re-encoding approach, for given parameters $(\In, \Ik; \Fn, \Fk)$. This design is subject to decodability constraints on the codes $\mathcal{C}^I$ and $\mathcal{C}^F$, such as both satisfying the MDS property, since these codes represent encoded data at different points in time within the storage system. 
A pair of codes that enables efficient conversion from an $[\In, \Ik]$ code to an $[\Fn, \Fk]$ code is referred to as an $(\In, \Ik; \Fn, \Fk)$ \emph{convertible code}, and the initial $[\In, \Ik]$ code is said to be \emph{$(\Fn, \Fk)$-convertible}. Furthermore, an $(\In, \Ik; \Fn, \Fk)$ convertible code is said to be MDS if both the initial and final codes are MDS. It is said to be systematic if both the initial and final codes are systematic. Works on convertible codes have primarily considered the following two cost metrics: 
\begin{enumerate}
    \item The \emph{access cost} of a conversion procedure is the sum of the \emph{read access cost}, which refers to the total number of initial nodes read, and the \emph{write access cost}, which refers to the total number of final nodes written.

    \item The \emph{bandwidth cost} of the conversion procedure is the sum of the \emph{read bandwidth cost}, which is the total amount of data read from the initial nodes, and the \emph{write bandwidth cost}, which is the total amount of data written into the final nodes.
\end{enumerate}

A convertible code is termed \emph{access-optimal} or \emph{bandwidth-optimal} if its conversion procedure achieves the minimum access cost or bandwidth cost, respectively, among all convertible codes with parameters $(\In, \Ik; \Fn, \Fk)$. While tight lower bounds on the access cost of linear MDS convertible codes are well-established in the literature \cite{maturana2020access, maturana2022convertible}, lower bounds on the bandwidth cost remain an open problem and are the focus of this paper.

{\textbf{Scope of this work:} In this work, we investigate the fundamental limits of MDS code conversion in the \emph{split regime}, where a single initial codeword is split into multiple final codewords (i.e., $\Ik = \Fl \Fk$ for some integer $\Fl \geq 2$). In~\cite{maturana2022bandwidth}, Maturana and Rashmi designed linear MDS convertible codes for this regime, where both the initial and final codes, as well as the conversion procedure, are linear. Furthermore, assuming that same number of symbols are downloaded from each information node and same number of symbols are downloaded from each parity node--the authors derived a matching lower bound on the conversion bandwidth in the \emph{increasing redundancy regime} ($\Ir \leq \Fr$), and further \emph{conjectured} that their constructions are bandwidth-optimal in the \emph{decreasing redundancy regime} ($\Ir > \Fr$) as well. Under the uniformity assumption, they also showed that no improvement is possible over the default approach when $\Fr \geq \Fk$.

In this work, we remove the uniformity assumption on the number of symbols downloaded per node and establish lower bounds on the bandwidth cost of systematic MDS convertible codes in the split regime. Our approach is information-theoretic: we model the message symbols, as well as the data stored and transmitted by the nodes during conversion, as random variables and derive entropy-based bounds.

We begin by deriving a lower bound on the bandwidth cost of systematic MDS convertible codes for all parameters in the split regime (Lemma~\ref{lem:bw_lb_trivial}). Furthermore, we establish that this bound is tight when $\Fr \geq \Fk$, and that no improvement is possible over the default approach. Next, we derive a second lower bound on the bandwidth cost of systematic MDS convertible codes in the split regime where $\Fr < \Fk$ (Lemma~\ref{lem:bw_lb_I}). We establish that this bound is tight when $\Ir \leq \Fr$. Finally, we derive a third lower bound on the bandwidth cost of systematic MDS convertible codes in the split regime where $\Fr < \Ir \leq \Ik$ and $\Fr < \Fk$ (Lemma~\ref{lem:bw_lb_II}). For the regime $\Fr < \Ir \le \Fk$, we establish that this bound is tight and resolve the conjecture posed by Maturana and Rashmi in~\cite{maturana2022bandwidth} for systematic codes (see Conjecture~\ref{conj:split_regime}). We compare and summarize our bounds in Theorem~\ref{thm:MDS_split_lb}.

\textbf{Outline of the paper:}
Section~\ref{sec:background} reviews the necessary background on convertible codes. In Section~\ref{sec:MDS_Storage}, we establish fundamental information-theoretic inequalities for MDS coded storage, which serve as essential tools for our analysis. Section~\ref{sec:bandwidth_cost} derives lower bounds on the bandwidth cost of systematic MDS convertible codes in the split regime and identifies the parameter regimes in which these bounds are tight. Finally, Section~\ref{sec:conclusion} summarizes the main results and discusses potential directions for future research.
}

\section{Background, Notations and Related Work}\label{sec:background}

Convertible codes are a class of erasure codes which are designed to be efficiently tunable while retaining desirable properties such as being systematic and MDS~\cite{chopra2024low, maturana2020access, maturana2022bandwidth, maturana2023bandwidth, ge2024mds, ge2025locally, ramkumar2025mds} or having a local repair property~\cite{maturana2023LRCC, Kong2023LocallyRC, shi2025bounds}. Before formally defining convertible codes, we introduce some notations and basic definitions.

For integers $a, i \in \mathbb{Z}_{\geq 1}$, let $[a]^i$ denote the set of integers $[a]^i\;=\;\{a(i-1)+1, a(i-1)+2, \ldots, ai\}$ and let $[a]\,\triangleq\,[a]^{1}$. For an $n$-length vector $\bfx$ and a subset $\cA \subseteq [n]$, the notation $\bfx_{\cA}$ represents the projection of $\bfx$ onto the indices specified by the set $\cA$. For a prime power $q$, let $\mathbb{F}_q$ denote the finite field of size $q$. For any sets $\cA \subseteq \cB$, and any collection of random variables $\{X_i\}_{i \in \cB}$, let $X_\cA$ denote $\{X_i\}_{i \in \cA}$.

\begin{definition}A \emph{code} of block length $n$ over a finite field $\Fq$ is a subset 
$\mathcal{C} \subseteq \Fqn$.  
Equivalently, a code may be described as an injective mapping
\[
    \mathcal{C}: \mathcal{M} \to \Fqn,
\]
where $\mathcal{M}$ is the message set.
\end{definition}

For the ease of exposition, we assume that the set of messages $\cM = \Fqk$, for some positive integer $k$.

\begin{definition}
Let $\mathcal{C}:\Fqk \to \Fqn$ be a code.
We say that $\mathcal{C}$ is an $[n,k]_q$ \emph{systematic code} if there exists a fixed set of $k$ coordinates 
$S \subseteq [n]$ such that, for every $\bfm \in \Fqk$, the projection of 
$\mathcal{C}(\bfm)$ onto $S$ equals $\bfm$.
\end{definition}

\begin{definition}  
An $[n,k,\alpha]_q$ \emph{vector code} is an injective map
\[
    \mathcal{C} : \mathbb{F}_q^{\alpha k} \to \mathbb{F}_q^{\alpha n}.
\]
For a message vector $\bfm \in \mathbb{F}_q^{\alpha k}$, let the codeword $\bfc = \mathcal{C}(\bfm)$. Then, for any coordinate $i \in [n]$, 
the $i$-th \emph{symbol} of $\bfc$ is the block $\bfc_i = (c_{\alpha(i-1)+1},\ldots,c_{\alpha i}) \in \mathbb{F}_q^{\alpha}$,
whose entries are called \emph{subsymbols}.
\end{definition}

\begin{definition}[Convertible Code {\cite{maturana2020access}}]  
A \emph{convertible code} with parameters $(\In,\Ik;\Fn,\Fk)$ consists of:
\begin{enumerate}
    \item An \emph{initial code} $\mathcal{C}^I : \mathbb{F}_q^{\alpha \Ik} \to \mathbb{F}_q^{\alpha \In}$ 
          and a \emph{final code} $\mathcal{C}^F : \mathbb{F}_q^{\alpha \Fk} \to \mathbb{F}_q^{\alpha \Fn}$.
    \item A pair of partitions $\IPart$ and $\FPart$ of the common message index set 
          $[M]$ with $M = \mathrm{lcm}(\Ik,\Fk)$, such that each block in $\IPart$ has size $\Ik$ 
          and each block in $\FPart$ has size $\Fk$.
    \item A  \emph{conversion procedure} that, for all 
          $\bfm \in \mathbb{\Fq}^M$, takes as input
          $\{\mathcal{C}^I(\bfm_P) : P \in \IPart\}$ and produces as output
          $\{\mathcal{C}^F(\bfm_P) : P \in \FPart\}$.
\end{enumerate}
\end{definition}

During the conversion procedure, storage nodes can be categorized into three distinct types:  
\begin{enumerate}  
    \item \textbf{Unchanged nodes}: These are storage nodes that remain part of the final codewords without modification. No conversion bandwidth is required for these nodes since they are retained as is.  
    \item \textbf{Retired nodes}: These are the initial storage nodes that do not appear in the final codewords and are therefore removed during the conversion procedure.  
    \item \textbf{New nodes}: These are the additional storage nodes introduced in the final codewords, which were not part of the initial codewords.  
\end{enumerate} 

The \emph{conversion coordinator}, which orchestrates the conversion process, downloads data from both unchanged and retired nodes to reconstruct the new nodes. Convertible codes that maximize the number of unchanged nodes are referred to as \emph{stable codes}.

\begin{remark}
\label{rem:sys_unchanged}
For systematic convertible codes, both the initial and final codes are systematic and encode the same data. This implies the following:

\begin{enumerate}
    \item The information nodes of the initial codewords remain unchanged during the conversion procedure. These nodes directly serve as the information nodes in the final codewords, thereby eliminating any need to rewrite the information nodes.
    \item Consequently, any new node generated during the conversion procedure must be a final parity node.
    \item Similarly, any node retired during the conversion procedure must be an initial parity node.
\end{enumerate}
\end{remark}

{
\subsection{Conversion Cost of MDS Convertible Codes}

Numerous previous works have studied the read access and read bandwidth costs of MDS convertible codes \cite{maturana2020access,maturana2023bandwidth,maturana2022bandwidth}. Let $d_{\mathrm{R}}$ denote the read access cost of any linear MDS $(\In, \Ik; \Fn, \Fk)$ convertible code in the general regime, with no restrictions on the code parameters. When $\Ik \neq \Fk$, Maturana and Rashmi~\cite{maturana2020access} established the following lower bound:
\begin{align*}
    d_{\mathrm{R}} \ge \begin{cases}
        \Il \Ik &  \Ir < \Fr~\text{or}~\Fr \ge \min\{\Ik,\Fk\}\;,\\
        \Il \Fr + (\Il \pmod{\Fl})(\Ik - \max\{\Fk \pmod{\Ik}, \Fr\}) & \text{otherwise}\;.
    \end{cases}
\end{align*} 
Furthermore, they demonstrated the tightness of this bound by constructing linear MDS convertible codes that achieve the minimum possible read access cost.

In~\cite{maturana2023bandwidth}, Maturana and Rashmi presented a lower bound on the bandwidth cost of MDS convertible codes in the ``merge regime'', where $\Fk = \Il \Ik$ for $\Il \ge 2$. They established that for any MDS $(\In, \Ik; \Fn, \Fk = \Il \Ik)$ convertible code, the read bandwidth cost, denoted by $\bw$, satisfies
\[
\bw \ge 
\begin{cases}
\Il \alpha \min\{\Ik, \Fr\} & \text{if } \Ir \ge \Fr \text{ or } \Ik \le \Fr\;, \\[8pt]
\Il \alpha \left(\Ir + \Ik \!\left(1 - \dfrac{\Ir}{\Fr}\right)\right) & \text{otherwise}\;.
\end{cases}
\]
The authors demonstrated the tightness of this bound by providing constructions of MDS convertible codes for all parameters in the merge regime achieving this bandwidth cost.

\begin{definition}[Uniform Cost Assumption]\label{def:uni_cost}
Under this assumption, the conversion coordinator downloads same number of symbols (say, $\beta$) from each information node and same number of symbols (say, $\sigma$) from each parity node during the conversion procedure. 
\end{definition}

Under the Uniform Cost Assumption, Maturana and Rashmi~\cite{maturana2022bandwidth} presented a lower bound on the bandwidth cost of MDS convertible codes in the split regime, where $\Ik = \Fl \Fk$ for $\Fk \ge 2$. They showed that for any stable MDS $(\In, \Ik = \Fl \Fk; \Fn, \Fk)$ convertible code, the read bandwidth cost satisfies
\[
\bw \ge 
\begin{cases}
    \Fl\Fk\alpha-\Ir\alpha\max\{\frac{\Fk}{\Fr}-1,0\} & \text{if } \Ir \leq \Fl\Fr, \\[8pt]
    \Fl\min\{\Fr,\Fk\}\alpha & \text{otherwise}\;.
\end{cases}
\]
The authors proved this bound to be tight (under the Uniform Cost Assumption) when either $\Fr \ge \Fk$, in which case it is achievable by the default approach, or $\Ir\le\Fr<\Fk$, with an explicit construction. For the regime in which $\Fr < \Fk$ and $\Fr < \Ir$, the authors proved that this bound is not always achievable and therefore not tight; moreover, they presented constructions of systematic MDS $(\In, \Ik = \Fl \Fk; \Fn, \Fk)$ convertible codes in this regime achieving a read bandwidth cost of 
\[
\bw = \Fl\Fr\alpha\frac{(\Fl-1)\Fk + \Ir}{(\Fl-1)\Fr + \Ir}\;.
\]
and posed the following conjecture which would imply these constructions are bandwidth-optimal under the Uniform Cost Assumption.

\begin{conjecture}\label{conj:split_regime}\cite[Theorem 7]{maturana2022bandwidth}. 
Under the Uniform Cost Assumption, all stable linear MDS $(\In, \Ik\;=\;\Fl \Fk; \Fn, \Fk)$ convertible codes with $\Fr < \Ir$ and $\Fr < \Fk$  satisfy: 
\begin{align*}
    \bw\;\geq\;\Fl \Fr \alpha \frac{(\Fl - 1)\Fk + \Ir}{(\Fl - 1)\Fr + \Ir}\;,
\end{align*}
    where $\bw$ represents the read bandwidth cost. 
\end{conjecture}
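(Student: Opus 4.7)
The plan is to prove the conjecture by establishing a universal lower bound on the read bandwidth cost of systematic MDS convertible codes in the split regime (without the Uniform Cost Assumption, hence stronger than what the conjecture demands, since systematicity is implied by linear systematic MDS), and specializing to the Uniform Cost setting. I would model the conversion information-theoretically: let $\bfm = (\bfm_1,\ldots,\bfm_\Fl) \in \mathbb{F}_q^{\Fl\Fk\alpha}$ be uniformly distributed with independent codeword submessages $\bfm_j \in \mathbb{F}_q^{\Fk\alpha}$. By systematicity, the info nodes of the $j$-th final codeword store $\bfm_j$, while the initial parities $P_\ell$ and new parities $Y_j$ are MDS-structured deterministic functions of $\bfm$ and $\bfm_j$ respectively. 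Each download is a per-node function $S_i = f_i(X_i)$ with $|S_i| = \beta$ or $T_\ell = g_\ell(P_\ell)$ with $|T_\ell| = \sigma$, and the total collection $D$ must determine all $Y_j$'s.

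First I would establish fundamental entropy relations from the MDS properties: $H(Y_j) = \Fr\alpha$ (since $\Fr < \Fk$), $H(\bfm_j \mid Y_j) = (\Fk-\Fr)\alpha$, and, crucially, $H(P_{[\Ir]} \mid \bfm_T) = \Ir\alpha$ for any strict subset $T \subsetneq [\Fl]$ (which holds because $|T|\Fk + \Ir \le \Fl\Fk$ in the regime $\Ir \le \Fk$). Conditioning on all codeword submessages except $\bfm_j$ and using that $D$ determines $Y_j$, one obtains the per-codeword bound $\Fr\alpha \le \Fk\beta + \Ir\sigma$, but summing this yields only $\bw \ge \Fl\Fr\alpha$, far from the conjecture. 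The central refinement is to exploit that the parity downloads form a shared resource constrained by a Han-like inequality: since the $\bfm_j$'s are independent, $\sum_j I(T_{[\Ir]};\bfm_j) \le I(T_{[\Ir]};\bfm) = H(T_{[\Ir]}) \le \Ir\sigma$. Combining this global bound with per-codeword entropy requirements, and carefully quantifying how the $\Fl-1$ other codewords' info downloads enable ``cancellation'' of their cross-contributions within the parity downloads, should yield an aggregated inequality of the form $[(\Fl-1)\Fr + \Ir]\cdot \bw \ge [(\Fl-1)\Fk + \Ir]\cdot \Fl\Fr\alpha$, which rearranges to the conjecture. Here the numerator $(\Fl-1)\Fk + \Ir$ corresponds to the info-nodes-plus-initial-parities supporting the other codewords (a non-info-set of the initial code since $\Ir \le \Fk$), while the denominator $(\Fl-1)\Fr + \Ir$ reflects the effective new-parity-plus-initial-parity bandwidth per codeword.

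The main obstacle is making the cross-codeword coupling argument rigorous in the universal setting (all systematic MDS codes and all conversion procedures, without linearity or genericity assumptions). Naive subset-based bounds of the form $(\Fl-|T|)\Fr\alpha \le (\Fl-|T|)\Fk\beta + \Ir\sigma$ for each $T \subseteq [\Fl]$ all collapse to the same weak bound $\bw \ge \Fl\Fr\alpha$ upon summation, so they cannot suffice. The successful approach must identify a tighter entropic decomposition, plausibly $H(D) = \Fl\Fr\alpha + H(D \mid Y_{[\Fl]})$ combined with a nontrivial lower bound on the residual $H(D \mid Y_{[\Fl]})$ that leverages both the per-node structure of $D$ and the MDS-induced correlations across codewords. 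Purely rank-based arguments fail here because cleverly aligned codes can collapse generic rank conditions, so the analysis must be entropy-based and universal; formalizing this balance is the core technical challenge, and it is precisely where Lemma~\ref{lem:bw_lb_II} of the paper is expected to do the heavy lifting.
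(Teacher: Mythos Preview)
Your proposal is not a proof but an outline that explicitly leaves the central step unfilled: you write that ``formalizing this balance is the core technical challenge'' and defer to Lemma~\ref{lem:bw_lb_II}. The Han-type observation $\sum_j I(T_{[\Ir]};\bfm_j)\le H(T_{[\Ir]})$ is correct, but as you yourself note, combining it with the per-codeword bound $\Fr\alpha\le\Fk\beta+\Ir\sigma$ only recovers $\bw\ge\Fl\Fr\alpha$; you never say what additional entropic inequality closes the gap. There is also an internal inconsistency: you announce a bound ``without the Uniform Cost Assumption'' but then write downloads as $|S_i|=\beta$, $|T_\ell|=\sigma$, which \emph{is} the Uniform Cost Assumption.

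For comparison, the paper does not attack $\bw$ directly via a single aggregated inequality. Instead it proceeds in two stages. First (Lemma~\ref{lem:bw_lb_HV}) it shows $\bw\ge\frac{\Fk-\Fr}{\Fk}\Hent{V_{[\Ik]}}+\Fl\Fr\alpha$, reducing the problem to lower-bounding the entropy of the information-node downloads alone. Second (Lemma~\ref{lem:cond_entropy_UVY}) it fixes a subset $\cS\subseteq[\Fl]$ of size $\theta_1$, applies the data-processing inequality to replace $V_{[\Fk]^{[\Fl]\setminus\cS}}$ by the full data $X_{[\Ik]\setminus[\Fk]^{\cS}}$, bounds $\Iinfo{U_{[\Ir]}}{V_{[\Fk]^{\cS}},X_{[\Ik]\setminus[\Fk]^{\cS}}}$ via a subset-averaging argument (Corollary~\ref{cor:mutual_information_avg_bound_2_variable}), and then averages over all such $\cS$. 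Setting $\theta_1=1$ gives exactly the conjectured bound when $\Ir\le\Fk$. This ``open up $\theta_1$ codewords to full data, then average'' mechanism is the missing idea in your sketch; your residual term $H(D\mid Y_{[\Fl]})$ is not what the paper bounds.

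Finally, note that the paper itself only resolves the conjecture for $\Fr<\Ir\le\Fk$; your own assumption $|T|\Fk+\Ir\le\Fl\Fk$ for all strict $T\subsetneq[\Fl]$ forces $\Ir\le\Fk$ as well, so neither argument touches the range $\Fk<\Ir$.
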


}

\subsection{Other Related Work}

In \cite{Justin2025}, Zhang and Rashmi  investigated the design of access-optimal convertible codes that provide information-theoretic security in the presence of passive eavesdroppers. 

A related problem is the \emph{scaling problem}~\cite{
zhang2010alv,
zheng2011fastscale,
wu2012gsr,
zhang2014rethinking,
huang2015scaleRS,
wu2016ioefficient,
zhang2018optimal,
hu2018generalized,
zhang2020efficient,
rai2015adaptive,
rai2015adaptive2,
wu2020optimal,
hu2021combinedlocality,
wu2022optimaltradeoff,
wu2022placement
}, which involves converting each codeword of an $[n, k, \alpha]$ code into a codeword of an $[n + s, k + s, \frac{k\alpha}{k + s}]$ code, for a given integer $s$. That is, the total amount of data stored in a codeword is preserved, but it is redistributed across a smaller or larger number of storage nodes. Another related problem is the \emph{data rebalancing problem}~\cite{CDR2020,CDRDecentralized2020, CDRCyclic2022}, which studies the problem of efficient redistribution of replicated data when nodes are added to or removed from a storage system.

In this paper, we present an information-theoretic approach where the data stored and transferred are represented as random variables, and use entropy and mutual information based arguments to derive the lower bounds on conversion bandwidth. A similar information-theoretic approach has been used to prove the non-achievability of the storage-repair-bandwidth tradeoff in regenerating codes~\cite{shah2011distributed}.

\section{Information-theoretic inequalities for MDS coded storage}
\label{sec:MDS_Storage}

In this section, we first restate the MDS property in information-theoretic terms and then establish a key structural result for $(n,k)$-MDS coded distributed storage systems, which will serve as the foundation for our subsequent analysis. Consider an $(n, k)$-MDS coded distributed storage system (DSS), where the MDS code is systematic. Let the DSS comprise of several independent codewords. In what follows, we consider one such codeword. Such a codeword consists of
\begin{enumerate}
  \item $k$ \emph{information nodes}, storing the message symbols directly,
  \item $r\,=\,n - k$ \emph{parity nodes}, each of which is a deterministic function of the $k$ information nodes.
\end{enumerate}
Denote by $X_1,\dots,X_k$ the random variables corresponding to the data stored in the information nodes, and by $Y_1,\dots,Y_r$ those in the parity nodes. Since each parity node is a deterministic function of the information nodes, we have
\[
\condH{Y_{[r]}}{X_{[k]}}\;=\;0\;.
\]
The MDS property further implies that  for every pair of index‐sets $\cA\subseteq[r]$ and $\cB\subseteq[k]$ with $\lvert \cA\rvert + \lvert \cB\rvert\;=\;k$,
\[
\condH{X_{[k]}}{Y_{\cA},\,X_{\cB}}\;=\;0\;.
\]
We assume each node can store at most $\alpha$ symbols over $\Fq$. Hence
\[
\Hent{X_i}\;\leq\;\alpha,\quad i\in[k]\;,
\qquad
\Hent{Y_j}\;\leq\;\alpha,\quad j\in[r]\;.
\]
Moreover, we assume that the message symbols are independent and uniformly distributed, so in fact
\[
\Hent{X_i}\;=\;\alpha,\quad i=1,\dots,k\;.
\]

\begin{proposition}\label{prop:joint_entropy}
The total entropy of the data stored across the $k$ information nodes and the $r$ parity nodes is
\[
    \Hent{X_{[k]},Y_{[r]}}\;=\;k\alpha\;.
\]
\end{proposition}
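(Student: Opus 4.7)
The plan is to apply the chain rule for entropy to split the joint entropy into a contribution from the information nodes and a conditional contribution from the parity nodes, and then use the two structural facts already set up in the preceding paragraphs.

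First I would write
\[
\Hent{X_{[k]},Y_{[r]}} \;=\; \Hent{X_{[k]}} \,+\, \condH{Y_{[r]}}{X_{[k]}}.
\]
The second term vanishes immediately because each parity node is a deterministic function of the information nodes, which is exactly the identity $\condH{Y_{[r]}}{X_{[k]}}=0$ stated in the paragraph above the proposition. So the task reduces to showing $\Hent{X_{[k]}}=k\alpha$.

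For the first term, I would use that the code is systematic: the $k$ information nodes store the $k$ message symbols directly (each of entropy $\alpha$), and the message symbols are independent and uniformly distributed over $\Fq^{\alpha k}$. Hence the $X_i$ are mutually independent with $\Hent{X_i}=\alpha$, so
\[
\Hent{X_{[k]}} \;=\; \sum_{i=1}^{k}\Hent{X_i} \;=\; k\alpha,
\]
and combining the two displayed equations gives the claim.

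There is essentially no obstacle here — this proposition is a bookkeeping lemma that packages the independence of the message symbols with the determinism of the parity nodes. The only thing to be careful about is to justify both halves explicitly from the assumptions stated just before the proposition (systematic encoding plus independent uniform messages for one half, and the deterministic dependence $\condH{Y_{[r]}}{X_{[k]}}=0$ for the other), rather than invoking the MDS property, which is not needed for this particular statement.
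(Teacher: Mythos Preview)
Your proposal is correct and essentially identical to the paper's own proof: both use the determinism of the parity nodes to reduce $\Hent{X_{[k]},Y_{[r]}}$ to $\Hent{X_{[k]}}$, and then use the independence and uniformity of the $X_i$ to conclude $\Hent{X_{[k]}}=k\alpha$. The only cosmetic difference is that you make the chain rule explicit, whereas the paper collapses that step into the single line $\Hent{X_{[k]},Y_{[r]}}=\Hent{X_{[k]}}$.
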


\begin{proof}
Since each parity node’s data is a deterministic function of the information nodes $X_{[k]}$, we have
\begin{align*}
    \Hent{X_{[k]}, Y_{[r]}}\;=\;\Hent{X_{[k]}}\;.
\end{align*}
Moreover, $X_{[k]}$ are independent and each has entropy $\alpha$, so
\begin{align*}
    \Hent{X_{[k]}}\;=\;\sum_{i=1}^{k} \Hent{X_{i}}\;=\;k\alpha\;.
\end{align*}
Combining these yields
\begin{align*}
    \Hent{X_{[k]}, Y_{[r]}}\;=\;k\alpha\;,
\end{align*}
as required.
\end{proof}

\begin{proposition}\label{Prop:iud_parity}
Let $\cA \subseteq [r]$ be any subset of parity‐node indices with $\lvert \cA\rvert\,\leq\,k$.  Then the random variables $Y_{\cA}$ are independent and uniformly distributed.
\end{proposition}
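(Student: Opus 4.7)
The plan is to reduce everything to the joint entropy computation carried out in Proposition~\ref{prop:joint_entropy}, using the MDS decodability property as the key source of equalities. Throughout, entropies are measured in $q$-ary units, so a random variable supported on $\Fq^\alpha$ has entropy at most $\alpha$, with equality iff it is uniform.

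First I would fix the notation. Set $t\;=\;\lvert\cA\rvert\leq k$ and choose any $\cB\subseteq[k]$ with $\lvert\cB\rvert\;=\;k-t$; such a $\cB$ exists since $t\leq k$ (taking $\cB=\emptyset$ is permitted when $t=k$, and the statement is vacuous when $t=0$). The role of $\cB$ is to pad $Y_\cA$ up to a set of $k$ nodes for which the MDS property can be invoked.

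Next I would compute $\Hent{Y_\cA,X_\cB}$ two ways. On the one hand, both $Y_\cA$ and $X_\cB$ are deterministic functions of $X_{[k]}$, so
\[
    \Hent{X_{[k]},Y_\cA,X_\cB}\;=\;\Hent{X_{[k]}}\;=\;k\alpha\;,
\]
by Proposition~\ref{prop:joint_entropy}. On the other hand, the chain rule gives
\[
    \Hent{X_{[k]},Y_\cA,X_\cB}\;=\;\Hent{Y_\cA,X_\cB}\;+\;\condH{X_{[k]}}{Y_\cA,X_\cB}\;,
\]
and the conditional term vanishes because $\lvert\cA\rvert+\lvert\cB\rvert\;=\;k$ and the MDS property recovers $X_{[k]}$ from $(Y_\cA,X_\cB)$. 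Hence $\Hent{Y_\cA,X_\cB}\;=\;k\alpha$.

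Finally I would squeeze $\Hent{Y_\cA}$ between two matching bounds. Subadditivity yields
\[
    k\alpha\;=\;\Hent{Y_\cA,X_\cB}\;\leq\;\Hent{Y_\cA}\;+\;\Hent{X_\cB}\;=\;\Hent{Y_\cA}\;+\;(k-t)\alpha\;,
\]
so $\Hent{Y_\cA}\geq t\alpha$. Conversely, applying subadditivity inside $\cA$ together with the storage bound $\Hent{Y_j}\leq\alpha$ gives
\[
    \Hent{Y_\cA}\;\leq\;\sum_{j\in\cA}\Hent{Y_j}\;\leq\;t\alpha\;.
\]
Therefore every inequality above is an equality, which forces $\Hent{Y_j}=\alpha$ for each $j\in\cA$ (uniform distribution on $\Fq^\alpha$) and $\Hent{Y_\cA}=\sum_{j\in\cA}\Hent{Y_j}$ (mutual independence), as desired.

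I do not foresee a substantive obstacle; the only care point is keeping track that the inequality $\Hent{X_\cB}\leq(k-t)\alpha$ is actually an equality by the independence and uniformity of the information nodes, which is what makes the squeeze work. The structure of the argument — use MDS to turn a cardinality condition into a conditional-entropy zero, then exploit the resulting chain-rule identity — is the same pattern I expect to reuse throughout Section~\ref{sec:bandwidth_cost}.
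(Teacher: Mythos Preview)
Your proof is correct and follows essentially the same approach as the paper's: pad $Y_\cA$ with information nodes $X_\cB$ to reach cardinality $k$, invoke the MDS property to conclude $\Hent{Y_\cA,X_\cB}=k\alpha$, and then squeeze $\Hent{Y_\cA}$ between $t\alpha$ (via subadditivity and $\Hent{X_\cB}=(k-t)\alpha$) and $t\alpha$ (via the storage bound). The only difference is cosmetic—you make the intermediate step $\Hent{X_{[k]},Y_\cA,X_\cB}=\Hent{X_{[k]}}$ explicit, whereas the paper compresses it into a single chain-rule line.
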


\begin{proof}
Choose any subset $\cB \subseteq [k]$ such that $\lvert \cA\rvert + \lvert \cB\rvert\;=\;k$.  By the MDS property,
\[
\condH{X_{[k]}}{Y_{\cA},X_{\cB}}\;=\;0\;.
\]
On the other hand, the chain rule of entropy and Proposition~\eqref{prop:joint_entropy} give
\[
\Hent{X_{[k]}} 
\;=\;\Hent{Y_{\cA},X_{\cB}} 
+ \condH{X_{[k]}}{Y_{\cA},X_{\cB}}
\;=\;\Hent{Y_{\cA},X_{\cB}}\;,
\]
and since $X_{[k]}$ are independent uniform, $\Hent{X_{[k]}}\;=\;k\alpha$.  Hence
\[
\Hent{Y_{\cA},X_{\cB}}\;=\;k\alpha\;.
\]
By subadditivity,
\[
\Hent{Y_{\cA},X_{\cB}}
\;\leq\;\Hent{Y_{\cA}} + \Hent{X_{\cB}} \;=\;\Hent{Y_{\cA}} + \lvert\cB\rvert\,\alpha\;,
\]
so
\[
\Hent{Y_{\cA}}\;\geq\;\lvert\cA\rvert\,\alpha\;.
\]
But each parity node has storage capacity $\alpha$, so
\[
\Hent{Y_{\cA}}\;\leq\;\sum_{i\in\cA}\Hent{Y_i}\;\leq\;\lvert\cA\rvert\,\alpha\;.
\]
Thus $\Hent{Y_{\cA}}\;=\;\lvert\cA\rvert\,\alpha$, and equality in subadditivity implies that $Y_i$, $i\in\cA$, are independent.  Moreover, each $Y_i$ attains its maximum entropy $\alpha$, so they are uniformly distributed.
\end{proof}

For each $i \in [r]$, let $\mu_i$ be a deterministic function of the random variable $Y_i$.  Similarly, for each $j \in [k]$, let $\nu_j$ be a deterministic function of $X_j$.  Since these mappings are deterministic, we have
\begin{align*}
    \condH{\nu_j(X_j)}{X_j}\;=\;0,
    &\quad
    \Hent{\nu_j(X_j)} \;\triangleq\; \beta_j \;\leq\;\alpha\,,\\
    \condH{\mu_i(Y_i)}{Y_i}\;=\;0,
    &\quad
    \Hent{\mu_i(Y_i)} \;\triangleq\; \sigma_i \;\leq\;\alpha\,.    
\end{align*}

For any subsets $\cA \subseteq [r]$ and $\cB \subseteq [k]$, we abuse notation slightly and write
\[
    \mu_\cA(Y_\cA)\;=\;\bigl\{\mu_i(Y_i)\bigr\}_{i\in\cA}\;,
    \quad
    \nu_\cB(X_\cB)\;=\;\bigl\{\nu_j(X_j)\bigr\}_{j\in\cB}\;.
\]

The following lemma establishes a general upper bound on the mutual information between functions of random variables, which will be useful in our later analysis.

\begin{lemma}\label{lem:mutual_information_bound}
    Let $Z_1, Z_2, \dots,Z_n$ be a collection of random variables. Given any deterministic functions $\{f_i\}_{i \in [n]}$, disjoint subsets $A, B \subseteq [n]$, and subsets $\cD_1 \subseteq A,\cD_2 \subseteq B$ such that $Z_{(A \cup B)\setminus(\cD_1 \cup \cD_2)}$ are independent,
    $$\Iinfo{f_A(Z_A)}{f_B(Z_B)}\;\le\; \Hent{f_{\cD_1}(Z_{\cD_1})} + \Hent{f_{\cD_2}(Z_{\cD_2})}\;.$$
\end{lemma}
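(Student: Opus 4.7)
The plan is to split $A$ and $B$ into an ``independent part'' and a ``dependent part'': define $A' \;=\; A \setminus \cD_1$ and $B' \;=\; B \setminus \cD_2$, so that $A' \cup B' \;=\; (A \cup B)\setminus(\cD_1 \cup \cD_2)$ is exactly the index set on which the independence hypothesis is available, while $\cD_1$ and $\cD_2$ collect the indices on which the $Z_i$ may be correlated. Writing $U_1 = f_{A'}(Z_{A'})$, $U_2 = f_{\cD_1}(Z_{\cD_1})$, $V_1 = f_{B'}(Z_{B'})$, and $V_2 = f_{\cD_2}(Z_{\cD_2})$, the goal reduces to showing
\[
\Iinfo{U_1, U_2}{V_1, V_2} \;\le\; \Hent{U_2} + \Hent{V_2}.
\]

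The next step is a double application of the chain rule for mutual information, first splitting the left argument and then the right:
\[
\Iinfo{U_1, U_2}{V_1, V_2} \;=\; \Iinfo{U_1}{V_1} + \condI{U_1}{V_2}{V_1} + \condI{U_2}{V_1, V_2}{U_1}.
\]
For the last two terms I would invoke ``conditioning reduces entropy'' to obtain $\condI{U_1}{V_2}{V_1} \le \condH{V_2}{V_1} \le \Hent{V_2}$ and $\condI{U_2}{V_1, V_2}{U_1} \le \condH{U_2}{U_1} \le \Hent{U_2}$, which together contribute exactly the right-hand side of the desired bound.

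The remaining obligation is to show that $\Iinfo{U_1}{V_1} = 0$. Since $A'$ and $B'$ are disjoint subsets of $(A \cup B) \setminus (\cD_1 \cup \cD_2)$, the hypothesis that the $Z_i$ for $i \in (A \cup B)\setminus(\cD_1 \cup \cD_2)$ are mutually independent implies that the tuple $Z_{A'}$ is independent of the tuple $Z_{B'}$. Consequently the deterministic images $U_1 = f_{A'}(Z_{A'})$ and $V_1 = f_{B'}(Z_{B'})$ are independent, so $\Iinfo{U_1}{V_1} = 0$, and combining the three pieces yields the claimed bound. I do not foresee a serious obstacle here; the only subtlety is parsing the hypothesis as \emph{mutual} (rather than merely pairwise) independence of $\{Z_i\}_{i \in (A \cup B)\setminus(\cD_1 \cup \cD_2)}$, which is what lets the decomposition isolate $U_2$ and $V_2$ as the sole contributors to the mutual information.
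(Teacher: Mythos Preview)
Your proof is correct. It differs from the paper's argument in organization: the paper expands $\Iinfo{f_A(Z_A)}{f_B(Z_B)} = \Hent{f_A(Z_A)} + \Hent{f_B(Z_B)} - \Hent{f_{A\cup B}(Z_{A\cup B})}$, applies subadditivity to split off $\Hent{f_{\cD_1}(Z_{\cD_1})}$ and $\Hent{f_{\cD_2}(Z_{\cD_2})}$, uses independence to merge the remaining two entropy terms into $\Hent{f_{(A\cup B)\setminus(\cD_1\cup\cD_2)}(Z_{(A\cup B)\setminus(\cD_1\cup\cD_2)})}$, and finishes with monotonicity of entropy. You instead use the chain rule for mutual information to isolate $\Iinfo{U_1}{V_1}$ (which vanishes by independence) and then bound each remaining conditional mutual information term by the conditional entropy of the ``dependent'' piece. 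Both routes are equally elementary and consume exactly the same strength of the hypothesis (independence of the block $Z_{A'}$ from the block $Z_{B'}$); the paper's version is a one-line entropy calculation, while yours makes the role of $\cD_1,\cD_2$ as the sole carriers of correlation slightly more transparent.
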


\begin{proof}
Observe that, as desired,
\begin{align*}
    \Iinfo{f_A(Z_A)}{f_B(Z_B)} 
    \;=\;&\Hent{f_A(Z_A)} + \Hent{f_B(Z_B)} - \Hent{f_{A \cup B}(Z_{A \cup B})}\\
    \;\leq\;&\Hent{f_{\cD_1}(Z_{\cD_1})} + \Hent{f_{\cD_2}(Z_{\cD_2})} 
    + \Hent{f_{A\setminus\cD_1}(Z_{A\setminus\cD_1})} \tag{Sub-additivity} \\
    &\qquad  + \Hent{f_{B\setminus\cD_2}(Z_{B\setminus\cD_2})} - \Hent{f_{A \cup B}(Z_{A \cup B})}\\
    \;=\;&\Hent{f_{\cD_1}(Z_{\cD_1})} + \Hent{f_{\cD_2}(Z_{\cD_2})} 
    \tag{Independence}\\
    &\qquad + \Hent{f_{(A \cup B)\setminus(\cD_1 \cup \cD_2)}(Z_{(A \cup B)\setminus(\cD_1 \cup \cD_2)})} - \Hent{f_{A \cup B}(Z_{A \cup B})}\\
    \;\leq\;&\Hent{f_{\cD_1}(Z_{\cD_1})} + \Hent{f_{\cD_2}(Z_{\cD_2})}\;.
\end{align*}
\end{proof}

Next, we prove the result that the minimum possible entropy of an $a$-set out of a set of $b$ random variables is at most the average entropy over all $a$-sets, given that these $a$-sets are all independent. We will use this in conjunction with the previous lemma to prove an upper bound on the mutual information between functions of data stored in information nodes and parity nodes.

\begin{lemma}\label{lem:min_avg_entropy_bound}
    Let $Z_1,\dots,Z_b$ be a collection of random variables and let $\{f_i\}_{i \in [b]}$ be deterministic. For any $a \le b$, where $\cA_a$ is the set of all subsets of $[b]$ of size $a$, if $Z_{A}$ are independent for all $A \in \cA_a$, $$\underset{A \in \cA_a}{\min} \;\Hent{f_A(Z_A)}\;\le\; \frac{a}{b}\,\sum_{i \in [b]}\Hent{f_{i}(Z_{i})}\;.$$
    If $Z_1,\dots,Z_b$ are also independent, then $$\underset{A \in \cA_a}{\min} \;\Hent{f_A(Z_A)}\;\le\; \frac{a}{b}\,\Hent{f_{[b]}(Z_{[b]})}\;.$$
\end{lemma}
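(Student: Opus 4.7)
The proof is a double-counting / averaging argument: the minimum is at most the average, and averaging entropies of $a$-subsets over all $\binom{b}{a}$ subsets reduces (by symmetry) to an average over single indices, because each index appears in exactly $\binom{b-1}{a-1}$ of the $a$-subsets. The independence hypothesis is used once, to turn the joint entropy $\Hent{f_A(Z_A)}$ into a sum of marginal entropies; otherwise sub-additivity would give only an upper bound, which is the wrong direction for this argument. I expect no serious obstacle; the only thing to be careful about is making sure the independence is invoked per $a$-subset rather than globally (since the first part of the lemma only assumes the weaker hypothesis).

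\textbf{First part.} For each $A\in\cA_a$, the hypothesis says $\{Z_i\}_{i\in A}$ are independent, hence $\{f_i(Z_i)\}_{i\in A}$ are independent, so
\[
\Hent{f_A(Z_A)}\;=\;\sum_{i\in A}\Hent{f_i(Z_i)}.
\]
Since the minimum is bounded above by the mean,
\[
\min_{A\in\cA_a}\Hent{f_A(Z_A)}
\;\le\;\frac{1}{|\cA_a|}\sum_{A\in\cA_a}\Hent{f_A(Z_A)}
\;=\;\frac{1}{\binom{b}{a}}\sum_{A\in\cA_a}\sum_{i\in A}\Hent{f_i(Z_i)}.
\]
Interchanging the order of summation and using that each $i\in[b]$ lies in exactly $\binom{b-1}{a-1}$ elements of $\cA_a$ gives
\[
\min_{A\in\cA_a}\Hent{f_A(Z_A)}
\;\le\;\frac{\binom{b-1}{a-1}}{\binom{b}{a}}\sum_{i\in[b]}\Hent{f_i(Z_i)}
\;=\;\frac{a}{b}\sum_{i\in[b]}\Hent{f_i(Z_i)}.
\]

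\textbf{Second part.} When $Z_1,\dots,Z_b$ are independent, so are $f_1(Z_1),\dots,f_b(Z_b)$, so $\Hent{f_{[b]}(Z_{[b]})}=\sum_{i\in[b]}\Hent{f_i(Z_i)}$, and substituting into the bound from the first part yields the second claim immediately.
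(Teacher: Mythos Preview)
Your proof is correct and follows essentially the same averaging argument as the paper: bound the minimum by the uniform average over all $a$-subsets, use the per-subset independence to write $\Hent{f_A(Z_A)}=\sum_{i\in A}\Hent{f_i(Z_i)}$, and exploit symmetry (you via double counting, the paper via an indicator/linearity-of-expectation computation) to obtain the factor $a/b$. The second part is also handled identically.
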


\begin{proof}
For any distribution $\cD$ over the set of subsets $\cA_a$, it holds that: 
\begin{align*}
    \underset{A \in \cA_a}{\min}\; \Hent{f_A(Z_A)}&\;\le\;\E_{A\sim\cD(\cA_a)}[\Hent{f_A(Z_A)}]\;.
\end{align*}
In particular, let $\cD$ be the uniform distribution over $\cA_a$. Then, 
\begin{align}
        \underset{A \in \cA_a}{\min}\; \Hent{f_A(Z_A)}&\;\le\;\E_{A \sim\mathsf{Unif}(\cA_a)}[\Hent{f_A(Z_A)}]\nonumber\\
        &\;=\;\E_{A \sim\mathsf{Unif}(\cA_a)}[\sum_{i\in A}\Hent{f_i(Z_i)}]\tag{Independence of $Z_{A}$}\\
        &\;=\;\E_{A \sim\mathsf{Unif}(\cA_a)}[\sum_{i\in [b]}\mathbb{I}_{i \in A}\cdot\Hent{f_i(Z_i)}]\nonumber\\
        &\;=\;\sum_{i\in [b]}\left(\Hent{f_i(Z_i)}\cdot\E_{A \sim\mathsf{Unif}(\cA_a)}[\mathbb{I}_{i \in A}]\right)\,\tag{Linearity of expectation}\\
        &\;=\;\frac{a}{b}\sum_{i\in [b]}\,\Hent{f_i(Z_i)}\;.\nonumber
    \end{align}
\end{proof}

We combine Lemma~\ref{lem:mutual_information_bound} and Lemma~\ref{lem:min_avg_entropy_bound} to produce an upper bound on the mutual information between $\mu_{[r]}(Y_{[r]})$ and $\nu_{[k]}(X_{[k]})$, which will be used in the subsequent section. We do so by considering all $\beta_1$-sets of a given subset $\cS_1 \subseteq [r]$ and $\beta_2$-sets of a given subset $\cS_2 \subseteq [k]$, given that $\beta_1 + \beta_2 = r$, which are in turn $\beta_1$-sets of $[r]$ and $\beta_2$-sets of $[k]$, respectively.

\begin{corollary}\label{cor:mutual_information_avg_bound_2_variable}
    Given any $\cS_1 \subseteq [r]$, $\cS_2 \subseteq [k]$, $\beta_1 \leq |\cS_1|$, and $\beta_2 \leq |\cS_2|$ such that $\beta_1 + \beta_2 = r$, 
\begin{align*}
    \Iinfo{\mu_{[r]}(Y_{[r]})}{\nu_{[k]}(X_{[k]})}&\;\le\;\underset{\substack{\cD_1 \subseteq \cS_1\\|\cD_1| = \beta_1}}{\min}\Hent{\mu_{\cD_1}(Y_{\cD_1})} + \underset{\substack{\cD_2 \subseteq \cS_2\\|\cD_2| = \beta_2}}{\min}\Hent{\nu_{\cD_2}(X_{\cD_2})} \\
   &\;\le\;\frac{\beta_1}{|\cS_1|}\sum_{i \in \cS_1}\Hent{\mu_{i}(Y_{i})} + \frac{\beta_2}{|\cS_2|}\Hent{\nu_{\cS_2}(X_{\cS_2})}\;.
\end{align*}
\end{corollary}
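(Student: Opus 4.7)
The plan is to prove the two inequalities separately, applying Lemma~\ref{lem:mutual_information_bound} for the first and Lemma~\ref{lem:min_avg_entropy_bound} (in its two forms) for the second, together with the structural facts of Section~\ref{sec:MDS_Storage}.

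For the first inequality, I would fix an arbitrary $\cD_1 \subseteq \cS_1$ with $|\cD_1|=\beta_1$ and $\cD_2 \subseteq \cS_2$ with $|\cD_2|=\beta_2$, and then view $\{Y_i\}_{i\in[r]} \cup \{X_j\}_{j\in[k]}$ as a single collection of random variables indexed by a disjoint union, with $A$ the $Y$-indices and $B$ the $X$-indices. Lemma~\ref{lem:mutual_information_bound} then applies provided $(Y_{[r]\setminus\cD_1}, X_{[k]\setminus\cD_2})$ is a mutually independent collection. To verify this, note that $|[r]\setminus\cD_1| + |[k]\setminus\cD_2| = (r-\beta_1)+(k-\beta_2) = k$, so by the MDS property $\condH{X_{[k]}}{Y_{[r]\setminus\cD_1}, X_{[k]\setminus\cD_2}} = 0$, giving $\Hent{Y_{[r]\setminus\cD_1}, X_{[k]\setminus\cD_2}} = \Hent{X_{[k]}} = k\alpha$ by Proposition~\ref{prop:joint_entropy}. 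The subadditivity upper bound is $\beta_2\alpha + (k-\beta_2)\alpha = k\alpha$ as well, so subadditivity is tight and mutual independence follows---this mirrors the argument in the proof of Proposition~\ref{Prop:iud_parity}. Lemma~\ref{lem:mutual_information_bound} thus yields $\Iinfo{\mu_{[r]}(Y_{[r]})}{\nu_{[k]}(X_{[k]})} \le \Hent{\mu_{\cD_1}(Y_{\cD_1})} + \Hent{\nu_{\cD_2}(X_{\cD_2})}$. Since the two terms on the right depend on disjoint choices ($\cD_1$ and $\cD_2$ range independently over subsets of $\cS_1$ and $\cS_2$), the two minimizations can be carried out separately, yielding the first claimed inequality.

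For the second inequality, I would apply Lemma~\ref{lem:min_avg_entropy_bound} to each of the two minima. The $\nu$-term involves the message symbols $X_{\cS_2}$, which are independent by assumption, so the stronger second form of the lemma applies directly and yields $\min_{\cD_2}\Hent{\nu_{\cD_2}(X_{\cD_2})} \le \tfrac{\beta_2}{|\cS_2|}\Hent{\nu_{\cS_2}(X_{\cS_2})}$. The $\mu$-term involves the parity variables $Y_{\cS_1}$; Proposition~\ref{Prop:iud_parity} guarantees that every $\beta_1$-subset of $\cS_1 \subseteq [r]$ has mutually independent coordinates (applicable in the regime $\beta_1 \le k$), so the first form of Lemma~\ref{lem:min_avg_entropy_bound} applies and gives $\min_{\cD_1}\Hent{\mu_{\cD_1}(Y_{\cD_1})} \le \tfrac{\beta_1}{|\cS_1|}\sum_{i \in \cS_1}\Hent{\mu_i(Y_i)}$. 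Summing these two bounds completes the proof.

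The only non-routine step is the independence verification for $(Y_{[r]\setminus\cD_1}, X_{[k]\setminus\cD_2})$ used to invoke Lemma~\ref{lem:mutual_information_bound}: a short entropy sandwich in which the MDS property supplies the lower bound $k\alpha$ on the joint entropy and subadditivity supplies the matching upper bound, forcing equality and hence mutual independence. Everything after that---decoupling the two minima on disjoint variable sets and applying the appropriate form of Lemma~\ref{lem:min_avg_entropy_bound} on each side---is routine bookkeeping.
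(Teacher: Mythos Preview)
Your proposal is correct and follows exactly the route the paper intends: the corollary is stated without proof, introduced simply as the combination of Lemma~\ref{lem:mutual_information_bound} and Lemma~\ref{lem:min_avg_entropy_bound}, and you have filled in precisely those details. Your entropy-sandwich argument (MDS lower bound meeting the subadditivity upper bound at $k\alpha$) to establish independence of $(Y_{[r]\setminus\cD_1},X_{[k]\setminus\cD_2})$ is the right mechanism and mirrors the proof of Proposition~\ref{Prop:iud_parity}; your caveat that Proposition~\ref{Prop:iud_parity} is only invoked for $\beta_1\le k$ is accurate and harmless, since every application of the corollary in the paper lives in that regime.
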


We also present a version of Corollary~\ref{cor:mutual_information_avg_bound_2_variable} that restricts $\cS_1$ to be the empty set, which will come in handy in the proofs in the subsequent section.

\begin{corollary}\label{cor:mutual_information_avg_bound_1_variable}
    Given any subset $\cS \subseteq [k]$ such that $r \leq |\cS|$, it follows that
\begin{align*}    
    \Iinfo{\mu_{[r]}(Y_{[r]})}{\nu_{[k]}(X_{[k]})}&\;\le\; \underset{\substack{\cD \subseteq \cS\\|\cD| = r}}{\min}\;\Hent{\nu_\cD(X_{\cD})}\\
    &\;\le\; \frac{r}{|\cS|}\Hent{\nu_{\cS}(X_{\cS})}\;.
\end{align*}

\end{corollary}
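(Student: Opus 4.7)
The plan is to observe that this statement is the specialization of Corollary~\ref{cor:mutual_information_avg_bound_2_variable} obtained by setting $\cS_1 = \emptyset$, $\beta_1 = 0$, $\cS_2 = \cS$, and $\beta_2 = r$: the constraint $\beta_1 + \beta_2 = r$ is preserved, the first term in each of the two right-hand sides vanishes (with the convention that the empty sum and a min over a one-element collection of the zero entropy both evaluate to zero), and what remains reduces exactly to the expressions in the claim.

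If, instead, I wanted a self-contained proof mirroring the two-variable version, I would invoke Lemma~\ref{lem:mutual_information_bound} with $A$ indexing the parity nodes, $B$ indexing the information nodes, $\cD_1 = \emptyset$, and $\cD_2$ an arbitrary $r$-subset of $\cS$. The only nontrivial hypothesis to verify is that the leftover collection $\bigl(Y_{[r]},\,X_{[k]\setminus\cD_2}\bigr)$ consists of mutually independent random variables. To verify this, note that this collection has size $r + (k-r) = k$; by the MDS property these $k$ nodes determine $X_{[k]}$, so their joint entropy equals $k\alpha$ by Proposition~\ref{prop:joint_entropy}. Combining this with Proposition~\ref{Prop:iud_parity} (applicable because $r \leq |\cS| \leq k$) and the fact that each $X_j$ has entropy $\alpha$, the sum of individual entropies is also $k\alpha$, so equality in subadditivity forces independence. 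Lemma~\ref{lem:mutual_information_bound} then yields $\Iinfo{\mu_{[r]}(Y_{[r]})}{\nu_{[k]}(X_{[k]})} \leq \Hent{\nu_{\cD_2}(X_{\cD_2})}$, and minimizing over $\cD_2 \subseteq \cS$ with $|\cD_2| = r$ establishes the first inequality.

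For the second inequality, I would apply Lemma~\ref{lem:min_avg_entropy_bound} to the family $\{X_i\}_{i \in \cS}$, which is independent by the standing assumption that the message symbols are independent, taking $a = r$ and $b = |\cS|$; the second (tighter) conclusion of that lemma immediately produces $\min_{\cD} \Hent{\nu_\cD(X_\cD)} \leq \tfrac{r}{|\cS|} \Hent{\nu_\cS(X_\cS)}$. The only genuinely new ingredient anywhere in the argument is the independence verification via the MDS property; once that is in place, the statement is a direct composition of the two preceding lemmas, and I anticipate no substantial obstacle.
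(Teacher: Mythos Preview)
Your proposal is correct and matches the paper's approach: the paper explicitly introduces this corollary as ``a version of Corollary~\ref{cor:mutual_information_avg_bound_2_variable} that restricts $\cS_1$ to be the empty set,'' which is precisely your first paragraph, and your self-contained alternative is just the direct combination of Lemma~\ref{lem:mutual_information_bound} and Lemma~\ref{lem:min_avg_entropy_bound} that underlies Corollary~\ref{cor:mutual_information_avg_bound_2_variable} itself. Your handling of the $0/0$ edge case in the second inequality of Corollary~\ref{cor:mutual_information_avg_bound_2_variable} is the only minor wrinkle, and your self-contained route via the MDS independence check cleanly sidesteps it.
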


\section{Bandwidth Cost of MDS Code Conversion in the Split Regime}
\label{sec:bandwidth_cost}

In this section, we derive lower bounds on the bandwidth cost of systematic $(\In, \Ik=\Fl\Fk; \Fn, \Fk)$-MDS convertible codes. Recall that in the split regime, a single initial codeword is split into multiple final codewords. Let $\Ir\;\triangleq\;\In -\Ik$ and $\Fr\;\triangleq\;\Fn -\Fk$ denote the initial and final number of parity nodes, respectively. 

As mentioned in Remark \ref{rem:sys_unchanged}, for systematic convertible codes, the initial information nodes remain unchanged during the conversion procedure and serve as the information nodes in the final codewords. We now outline the key components and assumptions as follows:

\noindent\paragraph*{Information Nodes}
\begin{enumerate}
    \item Let the set of $\Ik$ information nodes be indexed by $s_{[\Ik]}$. For $j\in[\Ik]$, let $X_{j}$ denote the random variable corresponding to the data stored in the information node indexed by $s_j$. We assume that the random variables $X_{[\Ik]}$ are independent and uniformly distributed.
\end{enumerate}

\noindent\paragraph*{Initial Parity Nodes}
\begin{enumerate}
    \item  Let the set of $\Ir$ initial parity nodes be indexed by $p^I_{[\Ir]}$. For $j \in[\Ir]$, let $Y_{j}^I$ denote the random variable corresponding to the data stored in the parity node indexed by $p^I_{j}$. These random variables are deterministically generated from the data stored in information nodes, and satisfy:
    \[
    \condH{Y^I_{[\Ir]}}{X_{[\Ik]}}\;=\;0\;.
    \]

    \item For any subsets $\cA \subseteq [\Ir]$ and $\cB \subseteq [\Ik]$ such that $\lvert\cA\rvert + \lvert\cB\rvert\;=\;\Ik$, the MDS property ensures: 
    \[
\condH{X_{[\Ik]}}{Y_{\cA}^I, X_{\cB}}\;=\;0\;.
    \]
\end{enumerate}

\noindent\paragraph*{Final Parity Nodes}
Let $t \in [\Fl]$ index the final codewords. 
\begin{enumerate}
    \item  Let the information nodes indexed by ${s_{[\Fk]^t}}$ be the information nodes of the $t$-th codeword. Recall that $[\Fk]^t$ denotes the set $\{(t-1)\Fk + 1, \ldots t\Fk\}$. 

     \item Let the $\Fr$ final parity nodes of the $t$-th codeword be indexed by $p^F_{[\Fr]^t}$. For $j \in [\Fr]^t$, let $Y_{j}^F$  denote the random variable corresponding to the data stored in parity node indexed by $p^F_j$. These parity nodes are fully determined by the corresponding $\Fk$ information nodes; that is, 
     \[
     \condH{Y_{[\Fr]^t}^F }{ X_{[\Fk]^t}}\;=\;0\;.
     \]

    \item For any subsets $\cA \subseteq [\Fr]^t, \cB \subseteq [\Fk]^t$ such that $\lvert\cA\rvert + \lvert\cB\rvert\;=\;\Fk$, the MDS property ensures:
    \[
    \condH{X_{[\Fk]^t}}{Y_{\cA}^F, X_{\cB}}\;=\;0\;.
    \]

    \item  The MDS property also ensures each final codeword is independent of the others; that is,
    \[
    \Iinfo{X_{[\Ik] \setminus [\Fk]^t}}{X_{[\Fk]^t}}\;=\;0\;.
    \]
\end{enumerate}

Lastly, we assume that all the nodes have a storage capacity of $\alpha$ symbols. 

\subsection{Conversion Coordinator and Bandwidth Analysis}
In the following lemma, we show that all systematic MDS convertible codes are stable in the split regime. The proof follows directly from the MDS property; however, for completeness, we provide a formal proof.

\begin{lemma}\label{lem:stability}
All systematic $(\In, \Ik; \Fn, \Fk)$-MDS convertible codes are stable in the split regime.
\end{lemma}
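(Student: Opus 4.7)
The plan is to show that every systematic MDS $(\In,\Ik{=}\Fl\Fk;\Fn,\Fk)$ convertible code has exactly $\Ik$ unchanged nodes, and that this is already the maximum possible, which immediately yields stability. By Remark~\ref{rem:sys_unchanged}, the $\Ik$ information nodes are always unchanged in any systematic convertible code. Since both the initial and final codes are systematic and encode the same message, the final-codeword information nodes coincide with these $\Ik$ nodes, so the only way to strictly exceed $\Ik$ unchanged nodes would be for some initial parity node to also serve as a final parity node of some final codeword.

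To rule this out, I would argue by contradiction. Suppose the initial parity node indexed by $p^I_j$, with random variable $Y^I_j$, is retained as a final parity node of the $t$-th codeword for some $t \in [\Fl]$. As a final parity node of that codeword, $Y^I_j$ must be a deterministic function of $X_{[\Fk]^t}$, i.e., $\condH{Y^I_j}{X_{[\Fk]^t}}=0$. Because $\Fl \ge 2$ in the split regime, there exists an index $i^* \in [\Ik] \setminus [\Fk]^t$; since $[\Fk]^t \subseteq [\Ik] \setminus \{i^*\}$, it follows that $Y^I_j$ is also a deterministic function of $X_{[\Ik] \setminus \{i^*\}}$, so $\condH{Y^I_j}{X_{[\Ik] \setminus \{i^*\}}}=0$.

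Next, I would apply the MDS property of the initial code to the $\Ik$-set consisting of $p^I_j$ together with all information nodes except $s_{i^*}$, which gives $\condH{X_{i^*}}{Y^I_j, X_{[\Ik] \setminus \{i^*\}}} = 0$. However, conditioning on $Y^I_j$ is redundant here because $Y^I_j$ is a function of $X_{[\Ik] \setminus \{i^*\}}$, so this entropy reduces to $\condH{X_{i^*}}{X_{[\Ik] \setminus \{i^*\}}} = \Hent{X_{i^*}} = \alpha > 0$ by the independence and uniformity of the information symbols. This contradiction shows that no initial parity node can be unchanged, so the number of unchanged nodes is exactly $\Ik$, and hence every systematic MDS convertible code in the split regime is stable. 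The main point to handle carefully is translating the notion of an ``unchanged'' initial parity node into the functional dependency statement $\condH{Y^I_j}{X_{[\Fk]^t}}=0$; once this reduction is justified, the rest is a short entropy computation that does not rely on linearity.
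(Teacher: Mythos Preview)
Your proof is correct and follows essentially the same approach as the paper: both argue by contradiction that no initial parity node can serve as a final parity node, using the MDS property of the initial and final codes. The only cosmetic difference is that the paper phrases the contradiction via mutual information (showing $\Iinfo{X_{[\Fk]^t}}{Y^I_i}=0$ while $\Iinfo{X_{[\Fk]^t}}{Y^F_{(t-1)\Fr+j}}=\alpha$), whereas you phrase it via conditional entropy and MDS decodability; the underlying entropy facts are the same.
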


\begin{proof}

Let $\cN$ and $\cU$ denote the sets of indices of the new nodes and unchanged nodes, respectively. From Remark~\ref{rem:sys_unchanged}, we have:
\[
    \cN \subseteq p^F_{[\Fl \Fr]} \quad \text{and} \quad \cU \supseteq s_{[\Fl \Fk]} \;.
\]
Note that to prove the lemma, it suffices to show that all the initial parity nodes must be retired, i.e.,
\[
    p^I_{[\Ir]} \cap \cU\;=\;\varnothing\;,
\]
which then implies $ \cN\;=\;p^F_{[\Fl \Fr]} $ and $ \cU\;=\;s_{[\Fl \Fk]} $.

Assume, for contradiction, that there exists an index $i \in [\Ir]$ such that $p^I_i \in \cU$. Then, there must exist a tuple $(t, j) \in [\Fl] \times [\Fr]$ such that:
\[
    Y^F_{(t-1)\Fr + j}\;=\;Y^I_i\;.
\]
Now, consider the mutual information between the random variable corresponding to the $(t-1)\Fr + j$-th final parity node and the random variables corresponding to the information nodes of the $t$-th codeword:
\[
    \Iinfo{X_{[\Fk]^t}}{Y^F_{(t-1)\Fr + j}}\;=\;\Iinfo{X_{[\Fk]^t}}{Y^I_i}\;.
\]
From the MDS property of the final code, we have
\begin{align*}
     \Iinfo{X_{[\Fk]^t}}{Y^F_{(t-1)\Fr + j}}\;=\;\Hent{Y^F_{(t-1)\Fr + j}} - \condH{Y^F_{(t-1)\Fr + j}}{X_{[\Fk]^t}}\;=\;\alpha\;.
\end{align*}   
However, for the initial MDS code, as $\Fk + 1 \le 2\Fk \le \Ik$,
\[
    \Iinfo{X_{[\Fk]^t}}{Y^I_i}\;=\;0\;.
\]
This leads to a contradiction. Therefore, we conclude that
\[
    p^I_{[\Ir]} \cap \cU\;=\;\varnothing\;.
\]
This completes the proof.
\end{proof}

Consequently, the conversion coordinator downloads data from the initial storage nodes to construct the new nodes, which by Lemma~\ref{lem:stability}, are exactly the final parity nodes.  Moreover, all initial parity nodes serve as retired nodes. 

To model the conversion process, we introduce the following random variables:
\begin{enumerate}
    \item $V_j$: the data downloaded from the $j$-th initial information node, for $j \in [\Ik]$.
    \item $U_i$: the data downloaded from the $i$-th initial parity node, for $i \in [\Ir]$.
\end{enumerate}
The following properties hold for these random variables:
\begin{enumerate}
    \item Each downloaded random variable is a deterministic function of its stored data:
    \[
      \Hent{V_j \,\bvert\, X_j}\;=\;0\;,
      \quad\;
      \Hent{U_i \,\bvert\, Y_i}\;=\;0\;.
    \]

    \item The entropy of each downloaded random variable is bounded by the per-node storage capacity $\alpha$:
    \[
      \Hent{V_j}\;\triangleq\;\beta_j\;\leq\;\alpha\;,
      \quad\;
      \Hent{U_i}\;\triangleq\;\sigma_i\;\leq\;\alpha\;.
    \]
 \item \textbf{Conversion Coordinator Property:} The final parity nodes are deterministically generated by the conversion coordinator, hence it follows that:
    \[
    \condH{Y_{[\Fl\Fr]}^F}{V_{[\Ik]}, U_{[\Ir]}}\;=\;0\;.
    \]
\end{enumerate}

\noindent Note that the read bandwidth cost, denoted by $\bw$, is the total number of symbols downloaded by the conversion coordinator during the conversion process.

\begin{definition}\label{def:bw_lb}
The read bandwidth cost of any systematic convertible code is defined as
\[
    \bw\;:=\;\sum_{i=1}^{\Ir} \sigma_i +\sum_{j=1}^{\Ik} \beta_j\;.
\]
\end{definition}

The following proposition follows directly from the independence of the data stored across different final codewords.

\begin{proposition}\label{prop:cond_entropy_final}
For any $\cS \subseteq [\Fl]$, conditional entropy of the final parity data given the downloaded data from the information nodes of any systematic convertible code satisfies:
\[
    \condH{Y^F_{[\Fr]^\cS}}{V_{[\Fk]^\cS}}
    \;=\;
    \sum_{t\in \cS} \condH{Y^F_{[\Fr]^t}}{V_{[\Fk]^t}}.
\]
\end{proposition}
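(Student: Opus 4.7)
The plan is to leverage the independence of the message blocks across the $\Fl$ final codewords and observe that both $Y^F_{[\Fr]^t}$ and $V_{[\Fk]^t}$ are deterministic functions of $X_{[\Fk]^t}$ alone. Once the joint pairs $\bigl(Y^F_{[\Fr]^t},\,V_{[\Fk]^t}\bigr)$ are shown to be mutually independent as $t$ ranges over $\cS$, the desired factorization follows from a routine chain-rule manipulation.

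First I would argue the independence of the pairs. Since the paper assumes that $X_{[\Ik]}$ are independent and uniformly distributed, the disjoint blocks $\{X_{[\Fk]^t}\}_{t\in[\Fl]}$ are mutually independent (this is also guaranteed by the MDS-style statement $\Iinfo{X_{[\Ik]\setminus[\Fk]^t}}{X_{[\Fk]^t}}=0$ combined with the chain rule of mutual information). Next, for each $t$, the variable $V_j$ for $j\in[\Fk]^t$ satisfies $\condH{V_j}{X_j}=0$, so $V_{[\Fk]^t}$ is a deterministic function of $X_{[\Fk]^t}$. Likewise, $\condH{Y^F_{[\Fr]^t}}{X_{[\Fk]^t}}=0$ ensures that $Y^F_{[\Fr]^t}$ is a deterministic function of $X_{[\Fk]^t}$. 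Because measurable functions of independent random variables are independent, the vectors $\bigl(Y^F_{[\Fr]^t},V_{[\Fk]^t}\bigr)$ for $t\in\cS$ form a mutually independent collection.

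Then I would carry out the entropy calculation. Using additivity of joint entropy over independent tuples followed by the definition of conditional entropy:
\begin{align*}
\condH{Y^F_{[\Fr]^\cS}}{V_{[\Fk]^\cS}}
&= \Hent{Y^F_{[\Fr]^\cS},V_{[\Fk]^\cS}} - \Hent{V_{[\Fk]^\cS}} \\
&= \sum_{t\in\cS}\Hent{Y^F_{[\Fr]^t},V_{[\Fk]^t}} - \sum_{t\in\cS}\Hent{V_{[\Fk]^t}} \\
&= \sum_{t\in\cS}\condH{Y^F_{[\Fr]^t}}{V_{[\Fk]^t}},
\end{align*}
which is the claimed identity.

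There is no real obstacle here; the only subtlety is a bookkeeping one, namely making sure that the mutual independence of all $\Fl$ blocks (rather than just pairwise independence of one block with the union of the rest) is invoked correctly before splitting the joint entropies. Once that is established, the rest is a one-line consequence of the chain rule.
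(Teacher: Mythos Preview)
Your proof is correct and rests on exactly the same observation as the paper's: the pairs $\bigl(Y^F_{[\Fr]^t},V_{[\Fk]^t}\bigr)$ are mutually independent across $t$ because each is a deterministic function of the corresponding independent block $X_{[\Fk]^t}$. The only cosmetic difference is that the paper applies the chain rule to the conditional entropy and then drops the extraneous conditioning via independence, whereas you expand the conditional entropy as a difference of joint entropies and split each by additivity; both routes are equivalent one-line consequences of the same independence fact.
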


\begin{proof}
    Enumerate $\cS = \{i_j \mid j \in [|\cS|]\}$. It follows that, as desired,
\begin{align*}
    \condH{Y^F_{[\Fr]^\cS}}{V_{[\Fk]^\cS}}\;=\;&\sum_{j=1}^{|\cS|}\condH{Y^F_{[\Fr]^{i_j}}}{V_{[\Fk]^{\cS}},Y^F_{[\Fr]^{i_{[j-1]}}}}\tag{Chain rule}\\
    \;=\;&\sum_{j=1}^{|\cS|}\condH{Y^F_{[\Fr]^{i_j}}}{V_{[\Fk]^{i_j}}}\tag{Independence of final codewords}\\
    \;=\;&\sum_{t\in \cS} \condH{Y^F_{[\Fr]^t}}{V_{[\Fk]^t}}\;.
\end{align*}
\end{proof}

First, we start by proving that the bandwidth cost of conversion is lower bounded by the entropy of the data stored in the final parity nodes, or equivalently, the new nodes that are generated as a result of the conversion process.

\begin{lemma}\label{lem:bw_lb_trivial}
The read bandwidth cost of any systematic $(\In, \Ik=\Fl\Fk; \Fn, \Fk)$-MDS convertible code in the split regime satisfies:
    \[
    \bw \ge \Fl\min\{\Fk,\Fr\}\alpha\;.
    \]
\end{lemma}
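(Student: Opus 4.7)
The plan is to reduce the bandwidth cost to the joint entropy of the downloaded data, lower-bound that joint entropy by the entropy of the reconstructed final parity nodes, and then split across the $\Fl$ independent final codewords using the MDS structural result already established (Proposition~\ref{Prop:iud_parity}).

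First I would observe that by the Conversion Coordinator Property,
\[
    \condH{Y^F_{[\Fl\Fr]}}{V_{[\Ik]}, U_{[\Ir]}}\;=\;0\;,
\]
so $\Hent{Y^F_{[\Fl\Fr]}} \le \Hent{V_{[\Ik]}, U_{[\Ir]}}$. Sub-additivity of entropy then gives
\[
    \Hent{V_{[\Ik]}, U_{[\Ir]}}\;\leq\;\sum_{j\in[\Ik]}\Hent{V_j}+\sum_{i\in[\Ir]}\Hent{U_i}\;=\;\sum_{j\in[\Ik]}\beta_j+\sum_{i\in[\Ir]}\sigma_i\;=\;\bw\;,
\]
so $\bw \ge \Hent{Y^F_{[\Fl\Fr]}}$. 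This is the only place the conversion procedure is used; the remainder of the argument is purely a statement about the final MDS codewords.

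Next I would bound $\Hent{Y^F_{[\Fl\Fr]}}$ from below. Because the final parity nodes of the $t$-th codeword are deterministic functions of $X_{[\Fk]^t}$ and the information blocks $\{X_{[\Fk]^t}\}_{t\in[\Fl]}$ partition the independent, uniform collection $X_{[\Ik]}$, the random vectors $\{Y^F_{[\Fr]^t}\}_{t \in [\Fl]}$ are mutually independent, giving
\[
    \Hent{Y^F_{[\Fl\Fr]}}\;=\;\sum_{t=1}^{\Fl}\Hent{Y^F_{[\Fr]^t}}\;.
\]
For each $t$, I would apply Proposition~\ref{Prop:iud_parity} to the $t$-th final codeword: pick any $\cA \subseteq [\Fr]^t$ with $|\cA|=\min\{\Fk,\Fr\}$; then the proposition yields $\Hent{Y^F_{\cA}}=\min\{\Fk,\Fr\}\alpha$, and monotonicity of entropy gives $\Hent{Y^F_{[\Fr]^t}}\ge \min\{\Fk,\Fr\}\alpha$. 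Summing over $t$ produces
\[
    \bw\;\ge\;\Hent{Y^F_{[\Fl\Fr]}}\;\ge\;\Fl\min\{\Fk,\Fr\}\alpha\;,
\]
as claimed.

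There is no substantive obstacle here: the proof is essentially a bookkeeping exercise assembling three facts (coordinator determinism, sub-additivity, and uniform-independence of MDS parity subsets from Proposition~\ref{Prop:iud_parity}), combined with independence of the $\Fl$ final codewords. The only mild subtlety is making the independence step rigorous; I would handle it by noting that for each $t$, $Y^F_{[\Fr]^t}$ is a function of $X_{[\Fk]^t}$ alone, and the blocks $\{X_{[\Fk]^t}\}_{t\in[\Fl]}$ are disjoint subcollections of the independent uniform family $X_{[\Ik]}$, which immediately yields the required product-form joint entropy. The harder lemmas (Lemma~\ref{lem:bw_lb_I} and Lemma~\ref{lem:bw_lb_II}) will presumably require the full strength of Corollaries~\ref{cor:mutual_information_avg_bound_2_variable} and~\ref{cor:mutual_information_avg_bound_1_variable}, but for this baseline bound those machinery items are not needed.
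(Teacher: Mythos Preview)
Your proposal is correct and follows essentially the same approach as the paper: both arguments pass from $\bw$ to $\Hent{V_{[\Ik]},U_{[\Ir]}}$ via sub-additivity, then to $\Hent{Y^F_{[\Fl\Fr]}}$ via the Conversion Coordinator Property, and finally split across the $\Fl$ independent final codewords using the MDS property (the paper cites ``MDS Property'' where you invoke Proposition~\ref{Prop:iud_parity}, which amounts to the same thing). The only cosmetic difference is that the paper routes the middle step through the mutual information $\Iinfo{U_{[\Ir]},V_{[\Ik]}}{Y^F_{[\Fl\Fr]}}$, whereas you use the equivalent observation $\condH{Y^F_{[\Fl\Fr]}}{V_{[\Ik]},U_{[\Ir]}}=0\Rightarrow \Hent{Y^F_{[\Fl\Fr]}}\le\Hent{V_{[\Ik]},U_{[\Ir]}}$ directly.
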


\begin{proof}
Observe that, as desired,
    \begin{align*}
    \bw\;=\;&\sum_{i=1}^{\Ir} \sigma_i +\sum_{j=1}^{\Ik} \beta_j \tag{Definition~\ref{def:bw_lb}} \\
    \;\ge\;&\Hent{U_{[\Ir]}}+\Hent{V_{[\Ik]}}\\
    \;\ge\;&\Hent{U_{[\Ir]},V_{[\Ik]}}\\
    \;\geq\;&\Iinfo{U_{[\Ir]},V_{[\Ik]}}{Y^F_{[\Fl\Fr]}}\\
    \;=\;&\Hent{Y^F_{[\Fl\Fr]}} - \condH{Y^F_{[\Fl\Fr]}}{U_{[\Ir]}V_{[\Ik]}}\\
    \;=\;&\Hent{Y^F_{[\Fl\Fr]}}\tag{Conversion Coordinator Property}\\
    \;=\;&\sum_{t \in [\Fl]}\Hent{Y^F_{[\Fr]^t}}\tag{Independence of final codewords}\\
    \;=\;&\Fl\min\{\Fk,\Fr\}\alpha\;.\tag{MDS Property}
    \end{align*}
\end{proof}

For the regime $\Fr \ge \Fk$, the bound in Lemma~\ref{lem:bw_lb_trivial} is tight as it matches the bandwidth cost of conversion via the default approach. Therefore, we restrict our focus for the rest of this work to the regime $\Fr < \Fk$. 

Next, we derive a lower bound on the sum of a fraction of $\Hent{V_{[\Ik]}}$ and $\Hent{U_{[\Ir]}}$. This will be used to lower bound the bandwidth cost of conversion with a function that scales with the entropy of the data downloaded from the information nodes.

\begin{lemma}\label{lem:HV_HU_lb}
    When $\Fr < \Fk$, it holds that for any systematic $(\In, \Ik=\Fl\Fk; \Fn, \Fk)$-MDS convertible code in the split regime,
    \[
    \frac{\Fr}{\Fk} \Hent{V_{[\Ik]}} + \Hent{U_{[\Ir]}}  \;\geq\; \Fl\Fr\alpha\;.
    \]
\end{lemma}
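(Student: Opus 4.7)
The plan is to sandwich the quantity $\sum_{t \in [\Fl]}\condH{Y^F_{[\Fr]^t}}{V_{[\Fk]^t}}$ between an upper bound of $\Hent{U_{[\Ir]}}$ and a lower bound that is affine in $\Hent{V_{[\Ik]}}$, and then rearrange.

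For the upper bound, I would apply Proposition~\ref{prop:cond_entropy_final} with $\cS = [\Fl]$ to rewrite the sum as $\condH{Y^F_{[\Fl\Fr]}}{V_{[\Ik]}}$. By the Conversion Coordinator Property, $Y^F_{[\Fl\Fr]}$ is a deterministic function of $(V_{[\Ik]}, U_{[\Ir]})$, so $\condH{Y^F_{[\Fl\Fr]}}{V_{[\Ik]}} \le \condH{U_{[\Ir]}}{V_{[\Ik]}} \le \Hent{U_{[\Ir]}}$.

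For the lower bound, I fix $t \in [\Fl]$ and apply Corollary~\ref{cor:mutual_information_avg_bound_1_variable} inside the $t$-th final MDS codeword (taking the corollary's $\mu_i$'s to be the identity on $Y^F_i$ for $i \in [\Fr]^t$, the $\nu_j$'s to be the maps defining $V_j$ for $j \in [\Fk]^t$, and $\cS = [\Fk]^t$; this is valid since $\Fr < \Fk$). This yields $\Iinfo{Y^F_{[\Fr]^t}}{V_{[\Fk]^t}} \le \frac{\Fr}{\Fk}\Hent{V_{[\Fk]^t}}$. Combined with $\Hent{Y^F_{[\Fr]^t}} = \Fr\alpha$, obtained from Proposition~\ref{Prop:iud_parity} applied to the $t$-th final codeword (again using $\Fr < \Fk$), I get $\condH{Y^F_{[\Fr]^t}}{V_{[\Fk]^t}} \ge \Fr\alpha - \frac{\Fr}{\Fk}\Hent{V_{[\Fk]^t}}$. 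Summing over $t$ and chaining with the upper bound gives $\Hent{U_{[\Ir]}} \ge \Fl\Fr\alpha - \frac{\Fr}{\Fk}\sum_{t \in [\Fl]}\Hent{V_{[\Fk]^t}}$.

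The step I expect to be the subtle one is converting $\sum_{t \in [\Fl]}\Hent{V_{[\Fk]^t}}$ into $\Hent{V_{[\Ik]}}$. Plain subadditivity only yields $\Hent{V_{[\Ik]}} \le \sum_{t \in [\Fl]}\Hent{V_{[\Fk]^t}}$, which is the wrong direction for what is needed. Instead, I would use the stronger structural fact that each $V_j$ is a deterministic function of $X_j$ alone and the $X_j$'s are mutually independent, hence the $V_j$'s are mutually independent as well. Therefore both $\Hent{V_{[\Ik]}}$ and $\sum_{t \in [\Fl]}\Hent{V_{[\Fk]^t}}$ collapse to $\sum_{j \in [\Ik]}\Hent{V_j}$, and substituting the resulting equality gives the claimed inequality $\frac{\Fr}{\Fk}\Hent{V_{[\Ik]}} + \Hent{U_{[\Ir]}} \ge \Fl\Fr\alpha$.
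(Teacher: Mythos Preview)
Your proposal is correct and follows essentially the same approach as the paper's proof: both use the Conversion Coordinator Property to bound $\condH{Y^F_{[\Fl\Fr]}}{V_{[\Ik]}}$ by $\Hent{U_{[\Ir]}}$, apply Proposition~\ref{prop:cond_entropy_final} to split into per-codeword terms, invoke Corollary~\ref{cor:mutual_information_avg_bound_1_variable} together with $\Hent{Y^F_{[\Fr]^t}}=\Fr\alpha$ for the per-term lower bound, and then use the mutual independence of the $V_j$'s to collapse $\sum_t \Hent{V_{[\Fk]^t}}$ into $\Hent{V_{[\Ik]}}$. The only difference is cosmetic: the paper writes the argument as a single forward chain starting from $\frac{\Fr}{\Fk}\Hent{V_{[\Ik]}}+\Hent{U_{[\Ir]}}$, whereas you organize it as a sandwich on $\sum_t \condH{Y^F_{[\Fr]^t}}{V_{[\Fk]^t}}$.
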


\begin{proof}
    Observe that, as desired,
    \begin{align*}
        \frac{\Fr}{\Fk} \Hent{V_{[\Ik]}} + \Hent{U_{[\Ir]}}\;\geq\;&\frac{\Fr}{\Fk} \Hent{V_{[\Ik]}} + \condI{U_{[\Ir]}}{Y^F_{[\Fl\Fr]}}{V_{[\Ik]}}\\
        \;=\;&\frac{\Fr}{\Fk} \Hent{V_{[\Ik]}} + \condI{U_{[\Ir]}}{Y^F_{[\Fl\Fr]}}{V_{[\Ik]}} \tag{Conversion Coordinator Property}\\
        &\qquad+ \condH{Y^F_{[\Fl\Fr]}}{U_{[\Ir]},V_{[\Ik]}}\\
        \;=\;&\frac{\Fr}{\Fk} \Hent{V_{[\Ik]}} + \condH{Y^F_{[\Fl\Fr]}}{V_{[\Ik]}}\\
        \;=\;&\frac{\Fr}{\Fk} \Hent{V_{[\Ik]}} + \sum_{t \in [\Fl]}\condH{Y^F_{[\Fr]^t}}{V_{[\Fk]^t}}\tag{Proposition~\ref{prop:cond_entropy_final}}\\
        \;=\;&\frac{\Fr}{\Fk} \Hent{V_{[\Ik]}} + \sum_{t \in [\Fl]}\left[\Hent{Y^F_{[\Fr]^t}}- \Iinfo{Y^F_{[\Fr]^t}}{V_{[\Fk]^t}}\right]\\
        \;\geq\;&\frac{\Fr}{\Fk} \Hent{V_{[\Ik]}} + \sum_{t \in [\Fl]}\left[\Fr\alpha
        - \frac{\Fr}{\Fk}\Hent{V_{[\Fk]^t}}\right]\tag{MDS Property, Corollary~\ref{cor:mutual_information_avg_bound_1_variable}}\\
        \;=\;&\Fl\Fr\alpha\tag{Independence of $V_{[\Ik]}$}\;.
    \end{align*}
\end{proof}

We proceed with lower bounding the bandwidth cost of conversion by a function of the entropy of the data downloaded from the information nodes, in the regime $\Fr < \Fk$.

\begin{lemma}\label{lem:bw_lb_HV}
    When $\Fr < \Fk$, the read bandwidth cost of any systematic $(\In, \Ik=\Fl\Fk; \Fn, \Fk)$-MDS convertible code in the split regime satisfies:
    \[
    \bw\;\ge\; \frac{\Fk - \Fr}{\Fk}\Hent{V_{[\Ik]}}+ \Fl\Fr\alpha\;.
    \]
\end{lemma}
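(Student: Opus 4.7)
The plan is to combine the trivial subadditivity bound on $\bw$ with the inequality already established in Lemma~\ref{lem:HV_HU_lb}. The target inequality is just a rearrangement once we express $\bw$ in terms of the joint entropies $\Hent{V_{[\Ik]}}$ and $\Hent{U_{[\Ir]}}$.

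First, I would invoke Definition~\ref{def:bw_lb} and use subadditivity of entropy on each block separately: since $\sigma_i = \Hent{U_i}$ and $\beta_j = \Hent{V_j}$, we have
\[
\bw \;=\; \sum_{i=1}^{\Ir}\sigma_i + \sum_{j=1}^{\Ik}\beta_j \;\geq\; \Hent{U_{[\Ir]}} + \Hent{V_{[\Ik]}}\;.
\]
This is the only place where we convert the per-node downloaded entropies into joint ones; the rest of the argument operates purely on the joint objects.

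Next, I would apply Lemma~\ref{lem:HV_HU_lb}, which (under the hypothesis $\Fr < \Fk$) gives
\[
\Hent{U_{[\Ir]}} \;\geq\; \Fl\Fr\alpha \;-\; \frac{\Fr}{\Fk}\Hent{V_{[\Ik]}}\;.
\]
Substituting into the previous display yields
\[
\bw \;\geq\; \Hent{V_{[\Ik]}} + \Fl\Fr\alpha - \frac{\Fr}{\Fk}\Hent{V_{[\Ik]}} \;=\; \frac{\Fk-\Fr}{\Fk}\Hent{V_{[\Ik]}} + \Fl\Fr\alpha\;,
\]
which is exactly the claimed bound. Note that since $\Fr < \Fk$, the coefficient $(\Fk-\Fr)/\Fk$ is strictly positive, so the bound is nontrivial.

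There is essentially no obstacle: all of the technical work (the mutual information / MDS argument reducing $\Iinfo{Y^F_{[\Fr]^t}}{V_{[\Fk]^t}}$ to $\frac{\Fr}{\Fk}\Hent{V_{[\Fk]^t}}$ via Corollary~\ref{cor:mutual_information_avg_bound_1_variable}, the decomposition of $\condH{Y^F_{[\Fl\Fr]}}{V_{[\Ik]}}$ across codewords via Proposition~\ref{prop:cond_entropy_final}, and the Conversion Coordinator Property) has already been absorbed into Lemma~\ref{lem:HV_HU_lb}. The present lemma is just the bookkeeping step that rewrites that inequality as a direct lower bound on $\bw$, suitable for being combined later with Lemma~\ref{lem:bw_lb_trivial} to give the final two-term bound on bandwidth.
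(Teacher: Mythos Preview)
Your proposal is correct and essentially identical to the paper's proof: both start from Definition~\ref{def:bw_lb}, use subadditivity to get $\bw \geq \Hent{U_{[\Ir]}} + \Hent{V_{[\Ik]}}$, and then apply Lemma~\ref{lem:HV_HU_lb} to reach the claimed bound. The only cosmetic difference is that the paper splits $\Hent{V_{[\Ik]}}$ into two parts before invoking the lemma, whereas you rearrange the lemma first and then substitute; the algebra is the same.
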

\begin{proof}
    Observe that, as desired,
    \begin{align*}
        \bw\;=\;&\sum_{i=1}^{\Ir} \sigma_i +\sum_{j=1}^{\Ik} \beta_j\tag{Definition~\ref{def:bw_lb}}\\
        \;\ge\;&\Hent{U_{[\Ir]}} + \Hent{V_{[\Ik]}} \\
        \;=\;&\frac{\Fk - \Fr}{\Fk}\Hent{V_{[\Ik]}} + \frac{\Fr}{\Fk} \Hent{V_{[\Ik]}} + \Hent{U_{[\Ir]}}\\
        \;\ge\;&\frac{\Fk - \Fr}{\Fk}\Hent{V_{[\Ik]}}+ \Fl\Fr\alpha\;.\tag{Lemma~\ref{lem:HV_HU_lb}}
    \end{align*}
\end{proof}

Using Lemma~\ref{lem:HV_HU_lb} and Lemma~\ref{lem:bw_lb_HV},  we derive a lower bound on the bandwidth cost of conversion for the regime $\Fr < \Fk$.

\begin{lemma}\label{lem:bw_lb_I}
When $\Fr < \Fk$, the read bandwidth cost of any systematic $(\In, \Ik=\Fl\Fk; \Fn, \Fk)$-MDS convertible code in the split regime satisfies:
\begin{align*}
    \bw\;\geq\; \Fl\Fk\alpha-\min\{\Ir,\Ik\}\alpha\left(\frac{\Fk}{\Fr}-1\right) \;.
\end{align*} 
\end{lemma}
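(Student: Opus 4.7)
The plan is to combine Lemma~\ref{lem:HV_HU_lb} with the definition of bandwidth cost and a trivial cardinality/capacity bound on $\Hent{U_{[\Ir]}}$; almost all of the real work has already been done in Lemma~\ref{lem:HV_HU_lb}, so this step is mostly algebraic bookkeeping. First, I would rearrange Lemma~\ref{lem:HV_HU_lb} into a one-sided lower bound on $\Hent{V_{[\Ik]}}$, namely
\[
\Hent{V_{[\Ik]}} \;\geq\; \Fl\Fk\alpha \;-\; \tfrac{\Fk}{\Fr}\,\Hent{U_{[\Ir]}}\;,
\]
which is valid because $\Fr < \Fk$ (in particular $\Fr \neq 0$).

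Next I would start from the chain
\[
\bw \;=\; \sum_{i=1}^{\Ir}\sigma_i + \sum_{j=1}^{\Ik}\beta_j \;\geq\; \Hent{U_{[\Ir]}} + \Hent{V_{[\Ik]}}\;,
\]
(using Definition~\ref{def:bw_lb} and the trivial fact that a sum of marginal entropies dominates a joint entropy, which dominates each individual entropy). Substituting the previous inequality gives
\[
\bw \;\geq\; \Fl\Fk\alpha \;-\; \Bigl(\tfrac{\Fk}{\Fr} - 1\Bigr)\Hent{U_{[\Ir]}}\;.
\]
Since $\Fr < \Fk$, the coefficient $\tfrac{\Fk}{\Fr} - 1$ is strictly positive, so to convert this into an unconditional lower bound I need an upper bound on $\Hent{U_{[\Ir]}}$.

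For that upper bound I would use two observations simultaneously. On the one hand, per-node storage capacity gives $\Hent{U_{[\Ir]}} \leq \sum_{i=1}^{\Ir}\Hent{U_i} \leq \Ir\alpha$. On the other hand, $U_{[\Ir]}$ is a deterministic function of $Y^I_{[\Ir]}$, which in turn is a deterministic function of $X_{[\Ik]}$, so by data processing / monotonicity of entropy $\Hent{U_{[\Ir]}} \leq \Hent{X_{[\Ik]}} = \Ik\alpha$. Combining, $\Hent{U_{[\Ir]}} \leq \min\{\Ir,\Ik\}\alpha$. Plugging this into the displayed inequality yields exactly
\[
\bw \;\geq\; \Fl\Fk\alpha \;-\; \min\{\Ir,\Ik\}\alpha\Bigl(\tfrac{\Fk}{\Fr} - 1\Bigr)\;,
\]
as required. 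There is no genuine obstacle here: the only subtlety is remembering to use the $\Ik\alpha$ ceiling on $\Hent{U_{[\Ir]}}$ (and not just the per-node capacity bound $\Ir\alpha$), which is what produces the $\min\{\Ir,\Ik\}$ factor and thus makes the bound non-vacuous in the regime $\Ir > \Ik$.
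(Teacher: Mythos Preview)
Your proof is correct and uses the same ingredients as the paper: the inequality $\bw \ge \Hent{U_{[\Ir]}} + \Hent{V_{[\Ik]}}$, Lemma~\ref{lem:HV_HU_lb}, and the upper bound $\Hent{U_{[\Ir]}} \le \min\{\Ir,\Ik\}\alpha$. The only difference is organizational: the paper first packages part of the argument into the intermediate Lemma~\ref{lem:bw_lb_HV} (the bound $\bw \ge \tfrac{\Fk-\Fr}{\Fk}\Hent{V_{[\Ik]}} + \Fl\Fr\alpha$) and then adds and subtracts $\Hent{U_{[\Ir]}}$ to invoke Lemma~\ref{lem:HV_HU_lb} a second time, whereas you rearrange Lemma~\ref{lem:HV_HU_lb} once to isolate $\Hent{V_{[\Ik]}}$ and substitute directly. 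Your route is slightly more economical for this lemma in isolation; the paper's route has the side benefit that Lemma~\ref{lem:bw_lb_HV} is reused later in the proof of Lemma~\ref{lem:bw_lb_II}.
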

\begin{proof}
Observe that, as desired,
\begin{align*}
    \bw\;\geq\;&\frac{\Fk - \Fr}{\Fk}\Hent{V_{[\Ik]}}+ \Fl\Fr\alpha\tag{Lemma~\ref{lem:bw_lb_HV}}\\
    \;=\;&\left(\frac{\Fk}{\Fr}-1\right)\left[\frac{\Fr}{\Fk} \Hent{V_{[\Ik]}}\right]+ \Fl\Fr\alpha\\
    \;=\;&\left(\frac{\Fk}{\Fr}-1\right)\left[\frac{\Fr}{\Fk} \Hent{V_{[\Ik]}}+\Hent{U_{[\Ir]}}-\Hent{U_{[\Ir]}} \right]+ \Fl\Fr\alpha\\
    \;\geq\;&\left(\frac{\Fk}{\Fr}-1\right)\left[\Fl\Fr\alpha-\min\{\Ir,\Ik\}\alpha \right]+ \Fl\Fr\alpha\tag{Lemma~\ref{lem:HV_HU_lb}, MDS Property}\\
    \;=\;&\Fl\Fk\alpha-\min\{\Ir,\Ik\}\alpha\left(\frac{\Fk}{\Fr}-1\right).
\end{align*}
\end{proof}

As a consequence of Lemma~\ref{lem:bw_lb_trivial} and Lemma~\ref{lem:bw_lb_I}, the following corollary matches the lower bound of
Maturana and Rashmi~\cite[Theorem 4]{maturana2022bandwidth} which was established under the Uniform Cost Assumption (Definition~\ref{def:uni_cost}).

\begin{corollary}
   The read bandwidth cost of any systematic $(\In, \Ik=\Fl\Fk; \Fn, \Fk)$-MDS convertible code in the split regime satisfies:
\[
\bw \ge 
\begin{cases}
    \Fl\Fk\alpha-\Ir\alpha\max\{\frac{\Fk}{\Fr}-1,0\} & \text{if } \Ir \leq \Fl\Fr\;,\\[8pt]
    \Fl\min\{\Fr,\Fk\}\alpha & \text{otherwise}\;. 
\end{cases}
\]
\end{corollary}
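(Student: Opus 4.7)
The plan is to obtain this corollary as a direct case consolidation of Lemma~\ref{lem:bw_lb_trivial} and Lemma~\ref{lem:bw_lb_I}, splitting first on whether $\Fr \geq \Fk$ and, within $\Fr < \Fk$, on whether $\Ir \leq \Fl\Fr$.

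First I would dispatch the regime $\Fr \geq \Fk$ by invoking Lemma~\ref{lem:bw_lb_trivial}, which yields $\bw \geq \Fl\min\{\Fk,\Fr\}\alpha = \Fl\Fk\alpha$. I would then check that the first branch of the corollary collapses to the same quantity, since $\max\{\Fk/\Fr - 1, 0\} = 0$, and that the guarding hypothesis $\Ir \leq \Fl\Fr$ is automatic via the chain $\Ir \leq \Ik = \Fl\Fk \leq \Fl\Fr$. Hence this regime falls cleanly into the first branch of the corollary.

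Next, I would turn to $\Fr < \Fk$. If $\Ir \leq \Fl\Fr$, I would invoke Lemma~\ref{lem:bw_lb_I} and observe that $\Ir \leq \Fl\Fr < \Fl\Fk = \Ik$, so $\min\{\Ir,\Ik\} = \Ir$ and the bound rewrites as $\Fl\Fk\alpha - \Ir\alpha(\Fk/\Fr - 1)$; since $\Fr < \Fk$ we also have $\max\{\Fk/\Fr - 1, 0\} = \Fk/\Fr - 1$, which matches the first branch of the corollary verbatim. If instead $\Ir > \Fl\Fr$, Lemma~\ref{lem:bw_lb_trivial} directly yields $\bw \geq \Fl\min\{\Fk,\Fr\}\alpha = \Fl\Fr\alpha$, matching the second branch.

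There is no real obstacle here; this corollary is bookkeeping that repackages the two preceding lemmas into the form of~\cite[Theorem~4]{maturana2022bandwidth}. The only step requiring mild care is checking that $\Ir = \Fl\Fr$ is exactly the right crossover point: at $\Ir = \Fl\Fr$ the bound from Lemma~\ref{lem:bw_lb_I} simplifies to $\Fl\Fr\alpha$, agreeing with Lemma~\ref{lem:bw_lb_trivial}, so Lemma~\ref{lem:bw_lb_I} is the operative bound for $\Ir \leq \Fl\Fr$ while Lemma~\ref{lem:bw_lb_trivial} takes over beyond that threshold, and together they cover every parameter choice in the split regime.
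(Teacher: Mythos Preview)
Your approach is the paper's: the corollary is stated as a direct consequence of Lemma~\ref{lem:bw_lb_trivial} and Lemma~\ref{lem:bw_lb_I}, without further detail, and your case split on $\Fr \geq \Fk$ versus $\Fr < \Fk$ (and then on $\Ir \leq \Fl\Fr$) is the natural way to unpack it. Your handling of the regime $\Fr < \Fk$ is correct in both subcases.

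There is, however, a genuine slip in your treatment of $\Fr \geq \Fk$. You assert that the hypothesis $\Ir \leq \Fl\Fr$ is automatic via the chain $\Ir \leq \Ik = \Fl\Fk \leq \Fl\Fr$, but the first inequality $\Ir \leq \Ik$ is \emph{not} among the standing assumptions: the split regime only fixes $\Ik = \Fl\Fk$, and the paper explicitly allows $\Ir > \Ik$ (see, e.g., Case~2 in the proof of Theorem~\ref{thm:MDS_split_lb} and the $\min\{\Ir,\Ik\}$ in Lemma~\ref{lem:bw_lb_I}). The fix is immediate: when $\Fr \geq \Fk$, \emph{both} branches of the corollary evaluate to $\Fl\Fk\alpha$ (the first because $\max\{\Fk/\Fr - 1, 0\} = 0$, the second because $\min\{\Fr,\Fk\} = \Fk$), so Lemma~\ref{lem:bw_lb_trivial} delivers the claimed bound regardless of which branch applies. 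You need not, and cannot, force this regime into the first branch.
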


Moreover, for the regime $\Ir \leq \Fr < \Fk$, the bound in Lemma~\ref{lem:bw_lb_I} is tight as it matches the bandwidth cost of conversion of the code constructions from \cite{maturana2022bandwidth}. Therefore, we restrict our focus for the rest of this work to the regime $\Fr < \Fk$ and $\Fr < \Ir$.  

Next, we obtain a lower bound on the entropy of the data downloaded from the information nodes, in the regime $\Fr < \Fk$ and $\Fr < \Ir$. We will combine this result with Lemma~\ref{lem:bw_lb_HV} to obtain a lower bound on the bandwidth cost of conversion as a function of only the parameters of the convertible code.

\begin{lemma}\label{lem:cond_entropy_UVY}
Let $\Fr < \Fk$ and $\Fr < \Ir$. Then, for any $\theta_1 \in [\Fl]$ and $\theta_2 := \max\{0, \Ir  - \theta_1 \Fk\}$, the entropy of the data downloaded from the information nodes of any systematic $(\In, \Ik=\Fl\Fk; \Fn, \Fk)$-MDS convertible code in the split regime satisfies:
\begin{align*}
\Hent{V_{[\Ik]}} \;\geq\; \Fl\Fk\alpha\frac{((\Fl-\theta_1)\Fr-\theta_2)}{((\Fl-\theta_1)\Fr-\theta_2) + \Ir}\;.
\end{align*}
\end{lemma}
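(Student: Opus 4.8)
The plan is to bound $\Hent{V_{[\Ik]}}$ from below by relating it to the entropy of the final parity nodes through a carefully chosen conditioning argument. The key quantity to track is the mutual information between the downloaded data and the final parity symbols, exploiting that the conversion coordinator must reconstruct \emph{all} $\Fl\Fr$ final parity nodes from $V_{[\Ik]}$ and $U_{[\Ir]}$ together. The parameter $\theta_1$ will count how many of the $\Fl$ final codewords we "charge" entirely to the initial parity download $U_{[\Ir]}$ (each such codeword needs $\Fr\alpha$ worth of information about it, and $U_{[\Ir]}$ has total entropy at most $\Ir\alpha$), while $\theta_2 = \max\{0,\Ir - \theta_1\Fk\}$ is a correction term accounting for the fact that the initial parity nodes, jointly with only a few information nodes, cannot carry more than $\Ik\alpha$ bits total but may overlap in what they reveal about different final codewords.

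First I would use the Conversion Coordinator Property together with the chain rule to write, for a suitably chosen subset $\cT \subseteq [\Fl]$ with $|\cT| = \Fl - \theta_1$, an inequality of the form $\Hent{U_{[\Ir]}} \ge \Iinfo{U_{[\Ir]}}{Y^F_{[\Fr]^\cT} \mid V_{[\Ik]}}$, and then lower-bound this conditional mutual information using Proposition~\ref{prop:cond_entropy_final} (independence of final codewords) to decompose it codeword-by-codeword. For each codeword $t \in \cT$, the term $\condH{Y^F_{[\Fr]^t}}{V_{[\Fk]^t}}$ must be "made up" by the download from the initial parities, and by the MDS property $\Hent{Y^F_{[\Fr]^t}} = \Fr\alpha$ since $\Fr < \Fk$. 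Applying Corollary~\ref{cor:mutual_information_avg_bound_1_variable} to each codeword bounds $\Iinfo{Y^F_{[\Fr]^t}}{V_{[\Fk]^t}} \le \frac{\Fr}{\Fk}\Hent{V_{[\Fk]^t}}$, so summing over $t \in \cT$ yields roughly $\Hent{U_{[\Ir]}} \ge (\Fl-\theta_1)\Fr\alpha - \frac{\Fr}{\Fk}\Hent{V_{[\Ik]}}$, modulo the $\theta_2$ correction. Separately, the MDS property of the initial code bounds $\Hent{U_{[\Ir]}} \le \min\{\Ir,\Ik\}\alpha$; combining these two bounds on $\Hent{U_{[\Ir]}}$ and solving for $\Hent{V_{[\Ik]}}$ gives the claimed fractional expression after rearrangement. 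The $\Ir$ in the denominator enters precisely from bounding $\Hent{U_{[\Ir]}} \le \Ir\alpha$ on one side while $\frac{\Fr}{\Fk}\Hent{V_{[\Ik]}}$ contributes the other piece, and the $\Fl\Fk\alpha$ prefactor is the total information content $\Hent{X_{[\Ik]}}$ that $V_{[\Ik]}$ is a function of.

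The main obstacle will be getting the $\theta_2 := \max\{0, \Ir - \theta_1\Fk\}$ correction term to appear with the right sign and coefficient. The subtlety is that when $\Ir$ is large relative to $\theta_1\Fk$, the initial parity nodes $U_{[\Ir]}$ carry more than enough information to reconstruct $\theta_1$ whole final codewords on their own (using the information nodes of those codewords, of which there are $\theta_1\Fk$), but the \emph{remaining} $\Ir - \theta_1\Fk$ initial parity nodes then have limited residual capacity to help with the other $\Fl - \theta_1$ codewords — and this residual must be apportioned via an averaging/counting argument over which final codewords the "leftover" parities assist. I expect this is where one invokes Corollary~\ref{cor:mutual_information_avg_bound_2_variable} rather than the single-variable version, splitting the index budget $\Fr = \beta_1 + \beta_2$ so that $\beta_1$ indices come from a set of initial parities and $\beta_2$ from information nodes, with the sizes chosen to reflect exactly $\theta_2$. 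Handling the boundary cases $\theta_1 = \Fl$ (where $\cT$ is empty and the bound degenerates) and $\Ir \le \theta_1\Fk$ (where $\theta_2 = 0$ and the argument simplifies) cleanly will require care, but these should follow by direct substitution once the main chain of inequalities is set up correctly.
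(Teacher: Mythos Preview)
Your outline has a genuine gap. The chain you describe---$\Hent{U_{[\Ir]}}\ge(\Fl-\theta_1)\Fr\alpha-\tfrac{\Fr}{\Fk}\Hent{V_{[\Ik]}}$ from the Conversion Coordinator Property plus Corollary~\ref{cor:mutual_information_avg_bound_1_variable}, combined with $\Hent{U_{[\Ir]}}\le\Ir\alpha$---yields only the linear bound
\[
\Hent{V_{[\Ik]}}\;\ge\;\frac{\Fk}{\Fr}\bigl((\Fl-\theta_1)\Fr-\Ir\bigr)\alpha,
\]
which is strictly weaker than the claimed fraction (take $\Fl=2,\theta_1=1,\Fk=3,\Fr=1,\Ir=2$: your bound is $-3\alpha$, the lemma gives $2\alpha$). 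A fractional expression with $\Ir$ in the denominator cannot arise from a single linear inequality; it requires $\Hent{V_{[\Ik]}}$ to enter with two different coefficients before solving.

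The missing step in the paper's argument is this. Rather than bounding $\Hent{U_{[\Ir]}}$ directly, one first upgrades $(Y^F_{[\Fl\Fr]},V_{[\Ik]})$ to $(Y^F_{[\Fr]^\cS},V_{[\Fk]^\cS},X_{[\Ik]\setminus[\Fk]^\cS})$ for a size-$\theta_1$ set $\cS$ (your $\cT$ is its complement) via the data processing inequality, and then applies Corollary~\ref{cor:mutual_information_avg_bound_2_variable} to the \emph{initial} code to obtain
\[
\Iinfo{U_{[\Ir]}}{V_{[\Fk]^\cS},X_{[\Ik]\setminus[\Fk]^\cS}}\;\le\;\theta_2\alpha+\frac{\Ir-\theta_2}{\theta_1\Fk}\Hent{V_{[\Fk]^\cS}}.
\]
Here the index budget is $\beta_1+\beta_2=\Ir$ (not $\Fr$ as you wrote), with $\beta_1=\theta_2$ and $\beta_2=\Ir-\theta_2$; this uses the initial MDS structure, since $U_{[\Ir]}$ are functions of $Y^I_{[\Ir]}$. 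This step is not merely a $\theta_2$-correction: even when $\theta_2=0$ it is what makes the $\Hent{V_{[\Fk]^w}}$ terms for $w\in\cS$ appear, with coefficient $\tfrac{\Ir}{\theta_1\Fk}$, alongside the $\tfrac{\Fr}{\Fk}$ coefficient for $w\notin\cS$. Averaging over all $\cS$ of size $\theta_1$ then symmetrizes these into a single coefficient $\tfrac{(\Fl-\theta_1)\Fr-\theta_2+\Ir}{\Fl\Fk}$ on $\Hent{V_{[\Ik]}}$, which is exactly the denominator of the claimed bound.
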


\begin{proof}
We begin by observing that
\begin{align*}
    0\;=\;&\condH{Y^F_{[\Fl\Fr]}}{U_{[\Ir]},V_{[\Ik]}}\tag{Conversion Coordinator Property}\\
    \;=\;&\condH{Y^F_{[\Fl\Fr]}}{V_{[\Ik]}}-\condI{Y^F_{[\Fl\Fr]}}{U_{[\Ir]}}{V_{[\Ik]}}\\
    \;\ge\;&\condH{Y^F_{[\Fl\Fr]}}{V_{[\Ik]}}-\condI{Y^F_{[\Fl\Fr]}}{U_{[\Ir]}}{V_{[\Ik]}} - \Iinfo{U_{[\Ir]}}{V_{[\Ik]}}\\
    \;=\;&\condH{Y^F_{[\Fl\Fr]}}{V_{[\Ik]}}-\Iinfo{U_{[\Ir]}}{Y^F_{[\Fl\Fr]},V_{[\Ik]}}\;.
\end{align*}

Since both the final parity nodes and the data downloaded from the information nodes of a final codeword are deterministic functions of the data stored on the information nodes, for any $\cS \subseteq [\Fl]$ of size $\theta_1$,
\begin{align*}
    0\;\ge\;&\condH{Y^F_{[\Fl\Fr]}}{V_{[\Ik]}}-\Iinfo{U_{[\Ir]}}{Y^F_{[\Fr]^{\cS}},V_{[\Fk]^{\cS}},X_{[\Ik]\setminus[\Fk]^{\cS}}}\tag{Data Processing Inequality}\\
    \;=\;&\condH{Y^F_{[\Fl\Fr]}}{V_{[\Ik]}}-\Iinfo{U_{[\Ir]}}{V_{[\Fk]^{\cS}}, X_{[\Ik]\setminus[\Fk]^{\cS}}} - \condI{U_{[\Ir]}}{Y^F_{[\Fr]^{\cS}}}{V_{[\Fk]^{\cS}}, X_{[\Ik]\setminus[\Fk]^{\cS}}}\\
    \;\geq\;&\condH{Y^F_{[\Fl\Fr]}}{V_{[\Ik]}}-\Iinfo{U_{[\Ir]}}{V_{[\Fk]^{\cS}}, X_{[\Ik]\setminus[\Fk]^{\cS}}} - \condH{Y^F_{[\Fr]^{\cS}}}{V_{[\Fk]^{\cS}}, X_{[\Ik]\setminus[\Fk]^{\cS}}}\\
    \;\geq\;&\condH{Y^F_{[\Fl\Fr]}}{V_{[\Ik]}}-\Iinfo{U_{[\Ir]}}{V_{[\Fk]^{\cS}}, X_{[\Ik]\setminus[\Fk]^{\cS}}} - \condH{Y^F_{[\Fr]^{\cS}}}{V_{[\Fk]^{\cS}}}\\
    \;=\;&\sum_{w \in [\Fl]}\condH{Y^F_{[\Fr]^w}}{V_{[\Fk]^w}}-\Iinfo{U_{[\Ir]}}{V_{[\Fk]^{\cS}}, X_{[\Ik]\setminus[\Fk]^{\cS}}} - \sum_{w \in \cS}\condH{Y^F_{[\Fr]^w}}{V_{[\Fk]^w}}\tag{Proposition~\ref{prop:cond_entropy_final}}\\
    \;=\;&\sum_{w \in [\Fl]\setminus\cS}\condH{Y^F_{[\Fr]^w}}{V_{[\Fk]^w}}-\Iinfo{U_{[\Ir]}}{V_{[\Fk]^{\cS}}, X_{[\Ik]\setminus[\Fk]^{\cS}}}\\
    \;\geq\;&\sum_{w \in [\Fl]\setminus\cS}\condH{Y^F_{[\Fr]^w}}{V_{[\Fk]^w}}-\frac{\theta_2}{\Ir}\sum_{i \in [\Ir]}\Hent{U_{i}}-\frac{\Ir - \theta_2}{\theta_1\Fk}\Hent{V_{[\Fk]^{\cS}}}\tag{Corollary~\ref{cor:mutual_information_avg_bound_2_variable}}\\
    \;\geq\;&\sum_{w \in [\Fl]\setminus\cS}\condH{Y^F_{[\Fr]^w}}{V_{[\Fk]^w}}-\theta_2\alpha-\frac{\Ir - \theta_2}{\theta_1\Fk}\Hent{V_{[\Fk]^{\cS}}}\\
    \;=\;&\sum_{w \in [\Fl]\setminus\cS}\condH{Y^F_{[\Fr]^w}}{V_{[\Fk]^w}}-\theta_2\alpha-\frac{\Ir - \theta_2}{\theta_1\Fk}\sum_{w \in \cS}\Hent{V_{[\Fk]^w}}\tag{Independence of $V_{[\Ik]}$}\\
    \;=\;&\sum_{w \in [\Fl]\setminus\cS}\left[\Hent{Y^F_{[\Fr]^w}} - \Iinfo{Y^F_{[\Fr]^w}}{V_{[\Fk]^w}}\right]-\theta_2\alpha-\frac{\Ir - \theta_2}{\theta_1\Fk}\sum_{w \in \cS}\Hent{V_{[\Fk]^w}}\\
    \;\geq\;&\sum_{w \in [\Fl]\setminus\cS}\left[\Fr\alpha - \frac{\Fr}{\Fk}\Hent{V_{[\Fk]^w}}\right]-\theta_2\alpha-\frac{\Ir - \theta_2}{\theta_1\Fk}\sum_{w \in \cS}\Hent{V_{[\Fk]^w}}\tag{MDS Property, Corollary~\ref{cor:mutual_information_avg_bound_1_variable}}\\
    \;=\;&((\Fl - \theta_1)\Fr - \theta_2)\alpha - \frac{\Fr}{\Fk}\sum_{w \in [\Fl]}\mathbb{I}_{w \notin \cS}\Hent{V_{[\Fk]^w}}-\frac{\Ir - \theta_2}{\theta_1\Fk}\sum_{w \in [\Fl]}\mathbb{I}_{w \in \cS}\Hent{V_{[\Fk]^w}}\;.
\end{align*}
If we let $\cA_{\theta_1}$ denote the set of all subsets of $[\Fl]$ of size $\theta_1$, it follows that
\begin{align*}
    0\;\geq\;&\max_{\cS \in \cA_{\theta_1}}\left[((\Fl - \theta_1)\Fr - \theta_2)\alpha - \frac{\Fr}{\Fk}\sum_{w \in [\Fl]}\mathbb{I}_{w \notin \cS}\Hent{V_{[\Fk]^w}}-\frac{\Ir - \theta_2}{\theta_1\Fk}\sum_{w \in [\Fl]}\mathbb{I}_{w \in \cS}\Hent{V_{[\Fk]^w}}\right]\\
    \;\geq\;&\E_{\cS\sim\mathsf{Unif}(\cA_{\theta_1})}\left[((\Fl - \theta_1)\Fr - \theta_2)\alpha - \frac{\Fr}{\Fk}\sum_{w \in [\Fl]}\mathbb{I}_{w \notin \cS}\Hent{V_{[\Fk]^w}}-\frac{\Ir - \theta_2}{\theta_1\Fk}\sum_{w \in [\Fl]}\mathbb{I}_{w \in \cS}\Hent{V_{[\Fk]^w}}\right]\\
    \;=\;&((\Fl - \theta_1)\Fr - \theta_2)\alpha - \frac{\Fr}{\Fk}\sum_{w \in [\Fl]}\frac{\Fl - \theta_1}{\Fl}\Hent{V_{[\Fk]^w}}-\frac{\Ir - \theta_2}{\theta_1\Fk}\sum_{w \in [\Fl]}\frac{\theta_1}{\Fl}\Hent{V_{[\Fk]^w}}\\
    \;=\;&((\Fl - \theta_1)\Fr - \theta_2)\alpha - \frac{((\Fl - \theta_1)\Fr - \theta_2) + \Ir}{\Fl\Fk}\Hent{V_{[\Ik]}}\;.\tag{Independence of $V_{[\Ik]}$}
\end{align*}
    Upon simple algebraic manipulation, we obtain the desired result.
\end{proof}

Observe that when $\Ir > \Ik$ and $\Fk > \Fr$, for any $\theta_1 \in [\Fl]$, $\theta_2 := \max\{0,  \Ir - \theta_1 \Fk\} =  \Ir - \theta_1 \Fk$ and  
\begin{align*}
    (\Fl-\theta_1)\Fr-\theta_2 &= (\Fl-\theta_1)\Fr-\Ir + \theta_1 \Fk\\
    &< (\Fl-\theta_1)\Fr-\Fl\Fk + \theta_1 \Fk\\
    &=  (\Fl-\theta_1)(\Fr-\Fk) \leq 0. 
\end{align*}
Hence, the lower bound in Lemma~\ref{lem:cond_entropy_UVY} is trivial when $\Ir > \Ik$ and $\Fk > \Fr$.

Using Lemma~\ref{lem:bw_lb_HV} and Lemma~\ref{lem:cond_entropy_UVY} and by setting $\theta_1 := \ceil{\frac{\Ir-(\Fr-1)}{\Fk}}$,  we derive a lower bound for the regime $\Fr < \Fk$ and $\Fr < \Ir \leq \Ik$.

\begin{lemma}\label{lem:bw_lb_II}
Let $\Fr < \Fk$ and $\Fr < \Ir \leq \Ik$. Then, the read bandwidth cost of any systematic $(\In,\Ik=\Fl\Fk;\,\Fn,\Fk)$-MDS convertible code in the split regime satisfies
\begin{align*}
\bw\;\geq\;\begin{cases}
            \Fl \Fr \alpha \frac{\left(\Fl - \ceil{\frac{\Ir}{\Fk}}\right)\Fk + \Ir}{\left(\Fl - \ceil{\frac{\Ir}{\Fk}}\right)\Fr + \Ir} &  \text{if}\quad (\Ir \bmod \Fk) \geq \Fr\;, \\
            \Fl\Fk\alpha - \Fl\Ir\alpha\frac{\Fk - \Fr}{\left(\Fl-\floor{\frac{\Ir}{\Fk}}\right)\Fr + \floor{\frac{\Ir}{\Fk}}\Fk} & \text{otherwise}\;.
        \end{cases}
\end{align*}
\end{lemma}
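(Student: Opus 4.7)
The plan is to combine \Lref{lem:bw_lb_HV} with the lower bound on $\Hent{V_{[\Ik]}}$ from \Lref{lem:cond_entropy_UVY}, after a judicious choice of the free parameter $\theta_1$. I would set $\theta_1 := \ceil{\frac{\Ir - \Fr + 1}{\Fk}}$, which lies in $[\Fl]$ under the present hypotheses, since $\Ir > \Fr$ gives $\theta_1 \geq 1$ and $\Ir \leq \Fl\Fk$ gives $\theta_1 \leq \Fl$. Define the auxiliary quantities $\theta_2 := \max\{0, \Ir - \theta_1\Fk\}$ and $A := (\Fl - \theta_1)\Fr - \theta_2$, so that \Lref{lem:cond_entropy_UVY} reads $\Hent{V_{[\Ik]}} \geq \Fl\Fk\alpha\cdot\frac{A}{A+\Ir}$.

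Plugging this into \Lref{lem:bw_lb_HV} and combining over a common denominator yields
\[
    \bw \;\geq\; \frac{\Fk - \Fr}{\Fk}\cdot\Fl\Fk\alpha\cdot\frac{A}{A + \Ir} + \Fl\Fr\alpha \;=\; \Fl\alpha\cdot\frac{\Fk\,A + \Fr\,\Ir}{A + \Ir}\;.
\]
It then remains to evaluate the right-hand side explicitly, via a case analysis on $s := \Ir \bmod \Fk$ relative to $\Fr$.

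When $s \geq \Fr$, one checks $\theta_1 = \ceil{\Ir/\Fk}$ and $\theta_2 = 0$, whence $A = (\Fl - \ceil{\Ir/\Fk})\Fr$; factoring $\Fr$ out of the numerator $\Fk A + \Fr\Ir$ directly recovers Case~1 of the statement. When $s < \Fr$, one instead has $\theta_1 = \floor{\Ir/\Fk}$ and $\theta_2 = s$, and a direct substitution shows $A + \Ir = (\Fl - \floor{\Ir/\Fk})\Fr + \floor{\Ir/\Fk}\Fk$ (the $s$ terms cancel). Rewriting $\Fl\alpha\cdot\frac{\Fk A + \Fr\Ir}{A+\Ir}$ as $\Fl\Fk\alpha - \Fl\alpha\cdot\frac{\Ir(\Fk - \Fr)}{A+\Ir}$ then recovers Case~2.

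The main conceptual hurdle is identifying the optimal $\theta_1$. Since the map $A \mapsto \frac{\Fk A + \Fr\Ir}{A+\Ir}$ is strictly increasing (as $\Fk > \Fr$), maximizing the bound reduces to maximizing $A$ over integer $\theta_1 \in [\Fl]$. In the regime $\theta_1\Fk \geq \Ir$ we have $A = (\Fl - \theta_1)\Fr$, which is decreasing in $\theta_1$; in the regime $\theta_1\Fk < \Ir$ we have $A = \Fl\Fr - \Ir + \theta_1(\Fk - \Fr)$, which is increasing in $\theta_1$ (again because $\Fk > \Fr$). The maximum is thus attained at one of the two integers bracketing the real threshold $\Ir/\Fk$, and comparing the two candidates $\floor{\Ir/\Fk}$ and $\ceil{\Ir/\Fk}$ produces precisely the dichotomy in terms of whether $s \geq \Fr$, motivating the compact closed form $\theta_1 = \ceil{(\Ir - \Fr + 1)/\Fk}$. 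Once this choice is justified, the remainder of the argument is routine algebraic simplification of the two resulting cases.
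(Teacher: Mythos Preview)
Your proposal is correct and follows essentially the same approach as the paper: the same choice $\theta_1 = \ceil{(\Ir-\Fr+1)/\Fk}$ (which the paper writes as $\ceil{(\Ir-(\Fr-1))/\Fk}$), the same case split on $\Ir \bmod \Fk$ versus $\Fr$, and the same combination of \Lref{lem:cond_entropy_UVY} with \Lref{lem:bw_lb_HV}. Your additional monotonicity argument justifying the optimality of this $\theta_1$ is a nice supplement that the paper itself omits.
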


\begin{proof}
Let $b := \Ir \bmod \Fk$. We set $\theta_1 := \ceil{\frac{\Ir-(\Fr-1)}{\Fk}}$, or equivalently,
\[
\theta_1 :=
\begin{cases}
\ceil{\dfrac{\Ir}{\Fk}} & \text{if } b \ge \Fr,\\[4pt]
\floor{\dfrac{\Ir}{\Fk}} & \text{otherwise,}
\end{cases}
\]
and since 
$\theta_2 := \max\{0,\, \Ir - \theta_1 \Fk\}$, it follows that
\[
\theta_2 =
\begin{cases}
0 & \text{if } b \ge \Fr,\\[4pt]
b & \text{otherwise.}
\end{cases}
\]

Hence, the lower bound in Lemma~\ref{lem:cond_entropy_UVY} reduces to:
    \begin{align}
\Hent{V_{[\Ik]}} \;\geq\; \begin{cases}
    \Fl\Fk\alpha\frac{\left(\Fl - \ceil{\frac{\Ir}{\Fk}}\right)\Fr}{\left(\Fl - \ceil{\frac{\Ir}{\Fk}}\right)\Fr + \Ir}  &  \text{if}~ b \geq \Fr\;, \\
    \Fl\Fk\alpha\frac{\left(\Fl - \floor{\frac{\Ir}{\Fk}}\right)\Fr - b}{\left(\Fl - \floor{\frac{\Ir}{\Fk}}\right)\Fr + \Ir - b} & \text{otherwise}\;.
\end{cases}\label{ineq:HV}
\end{align}
It follows that, as desired,
\begin{align*}
        \bw&\;\ge\; \frac{\Fk - \Fr}{\Fk}\Hent{V_{[\Ik]}}+ \Fl\Fr\alpha\tag{Lemma~\ref{lem:bw_lb_HV}}\\
        &\;\ge\;\begin{cases}
            \Fl(\Fk - \Fr)\alpha\frac{\left(\Fl - \ceil{\frac{\Ir}{\Fk}}\right)\Fr}{\left(\Fl - \ceil{\frac{\Ir}{\Fk}}\right)\Fr + \Ir} + \Fl\Fr\alpha
             &\text{if}~ b \geq \Fr\;, \\
             \Fl(\Fk-\Fr)\alpha\frac{\left(\Fl - \floor{\frac{\Ir}{\Fk}}\right)\Fr - b}{\left(\Fl - \floor{\frac{\Ir}{\Fk}}\right)\Fr + \Ir - b} + \Fl\Fr\alpha& \text{otherwise}\;.
        \end{cases}\tag{Inequality~\eqref{ineq:HV}}\\
        &\;=\;\begin{cases}
            \Fl\Fr\alpha\frac{\left(\Fl - \ceil{\frac{\Ir}{\Fk}}\right)(\Fk - \Fr)}{\left(\Fl - \ceil{\frac{\Ir}{\Fk}}\right)\Fr + \Ir} + \Fl\Fr\alpha
             &\text{if}~ b \geq \Fr\;, \\
             \Fl(\Fk-\Fr)\alpha\left(1-\frac{\Ir}{\left(\Fl - \floor{\frac{\Ir}{\Fk}}\right)\Fr + \floor{\frac{\Ir}{\Fk}}\Fk}\right) + \Fl\Fr\alpha& \text{otherwise}\;.
        \end{cases}\\
        &\;=\;\begin{cases}
            \Fl \Fr \alpha \frac{\left(\Fl - \ceil{\frac{\Ir}{\Fk}}\right)\Fk + \Ir}{\left(\Fl - \ceil{\frac{\Ir}{\Fk}}\right)\Fr + \Ir} & \Ir \leq (\Fl-1)\Fk\;, \\
            \Fl\Fk\alpha - \Fl\Ir\alpha\frac{\Fk - \Fr}{\left(\Fl-\floor{\frac{\Ir}{\Fk}}\right)\Fr + \floor{\frac{\Ir}{\Fk}}\Fk} & \text{otherwise}\;.
        \end{cases}
\end{align*}

\end{proof}

For the regime $\Fr < \Fk$ and $\Fr < \Ir \leq \Fk$, the bound in Lemma~\ref{lem:bw_lb_II} is tight as it matches the bandwidth cost of conversion of the code constructions from \cite{maturana2022bandwidth}. Consequently, Lemma~\ref{lem:bw_lb_II} resolves Conjecture~\ref{conj:split_regime} for systematic convertible codes for the regime $\Fr < \Fk$ and $\Fr < \Ir \leq \Fk$.

Finally, by combining Lemma~\ref{lem:bw_lb_trivial}, Lemma~\ref{lem:bw_lb_I}, and Lemma~\ref{lem:bw_lb_II}, we obtain the following theorem whose proof can be found in Appendix. 

\begin{restatable}{theorem}{MDSSplitLb}\label{thm:MDS_split_lb}
The read bandwidth cost of any systematic $(\In,\Ik=\Fl\Fk;\,\Fn,\Fk)$-MDS convertible code in the split regime satisfies
\begin{align*}
    \bw\;\geq\;\begin{cases}
        \Fl \Fr \alpha \frac{\left(\Fl - \ceil{\frac{\Ir}{\Fk}}\right)\Fk + \Ir}{\left(\Fl - \ceil{\frac{\Ir}{\Fk}}\right)\Fr + \Ir} & \text{~if}~ (\Ir \bmod \Fk) \geq \Fr,\;\Fr < \Ir \leq \Ik,\;\Fr < \Fk\;, \\
        \Fl\Fk\alpha - \Fl\Ir\alpha\frac{\Fk - \Fr}{\left(\Fl-\floor{\frac{\Ir}{\Fk}}\right)\Fr + \floor{\frac{\Ir}{\Fk}}\Fk} & \text{~if}~ (\Ir \bmod \Fk) < \Fr,\;\Fr < \Ir \leq \Ik,\;\Fr < \Fk\;, \\
\Fl\Fk\alpha-\Ir\alpha\left(\frac{\Fk}{\Fr}-1\right) & \text{ if}~\Ir \le \Fr < \Fk\;,\\
\Fl \min\{\Fk,\Fr\} \alpha & \text{~otherwise}\;.
    \end{cases} 
\end{align*}
 Moreover, this bound is tight when $\Fr \geq \Fk$ or $\Ir \leq \Fk$.
\end{restatable}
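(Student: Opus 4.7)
The plan is to establish the lower bound by a straightforward case analysis on $(\Ir,\Fr,\Fk)$, matching each regime with the tightest of the three already-proved bounds (Lemma~\ref{lem:bw_lb_trivial}, Lemma~\ref{lem:bw_lb_I}, Lemma~\ref{lem:bw_lb_II}), and then to verify tightness in the claimed parameter range by appealing to the explicit constructions referenced after those lemmas.

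First, I would enumerate the four cases that appear in the theorem. In the fourth (``otherwise'') case, the hypothesis of the first three cases fails, which leaves either $\Fr \geq \Fk$, or $\Fr<\Fk$ together with $\Ir>\Ik$. In both sub-cases Lemma~\ref{lem:bw_lb_trivial} directly yields $\bw \geq \Fl\min\{\Fk,\Fr\}\alpha$, so this case is immediate. For the third case ($\Ir \leq \Fr<\Fk$), since $\min\{\Ir,\Ik\}=\Ir$, Lemma~\ref{lem:bw_lb_I} specializes exactly to $\Fl\Fk\alpha - \Ir\alpha(\Fk/\Fr-1)$. For the first two cases ($\Fr<\Ir\leq \Ik$, $\Fr<\Fk$), Lemma~\ref{lem:bw_lb_II} already produces the two stated expressions, depending on whether $(\Ir \bmod \Fk) \geq \Fr$ or not. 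So the lower bound portion reduces to checking that the hypotheses of each of the three lemmas are satisfied in its corresponding case, which is a routine verification.

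For the tightness portion, I would separate the claim $\Fr \geq \Fk$ or $\Ir \leq \Fk$ into three sub-cases. When $\Fr \geq \Fk$, the default approach of re-encoding each final codeword from scratch downloads $\Fl \Fk \alpha = \Fl \min\{\Fk,\Fr\}\alpha$ symbols, matching case four, as noted immediately after Lemma~\ref{lem:bw_lb_trivial}. When $\Ir \leq \Fr < \Fk$, the systematic linear MDS split-regime constructions of Maturana and Rashmi~\cite{maturana2022bandwidth} attain $\Fl\Fk\alpha - \Ir\alpha(\Fk/\Fr-1)$, as observed in the paragraph preceding Lemma~\ref{lem:cond_entropy_UVY}. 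When $\Fr < \Ir \leq \Fk$, the constructions from~\cite{maturana2022bandwidth} with read bandwidth $\Fl\Fr\alpha\cdot\frac{(\Fl-1)\Fk+\Ir}{(\Fl-1)\Fr+\Ir}$ match the Lemma~\ref{lem:bw_lb_II} expression, because in this subregime $\lceil \Ir/\Fk\rceil = 1$ and $(\Ir \bmod \Fk) = \Ir \geq \Fr$, reducing the first branch of Lemma~\ref{lem:bw_lb_II} to exactly that quantity. The remaining sub-case $\Fr \geq \Fk \geq \Ir$ is already covered by the default-approach argument.

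The main (and really only) obstacle I anticipate is the bookkeeping in the sub-case $\Fr < \Ir \leq \Fk$: one needs to verify that plugging $\lceil \Ir/\Fk\rceil = 1$ and checking $(\Ir \bmod \Fk) = \Ir \geq \Fr$ correctly collapses the Lemma~\ref{lem:bw_lb_II} expression to the construction bound from~\cite{maturana2022bandwidth}, so that tightness transfers cleanly. Everything else is a straightforward case split and a citation of a previously-proved lemma or an existing construction.
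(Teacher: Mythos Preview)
Your plan is correct and mirrors the paper's approach: both reduce the theorem to a case split that invokes Lemma~\ref{lem:bw_lb_trivial}, Lemma~\ref{lem:bw_lb_I}, and Lemma~\ref{lem:bw_lb_II} in the appropriate regimes, and both cite the default approach and the constructions of~\cite{maturana2022bandwidth} for tightness. The paper's appendix proof additionally verifies that in each regime the stated bound dominates the other two lemmas' bounds (i.e., it is the maximum of $\cL_1,\cL_2,\cL_3$), but this is not needed to establish the inequality as stated---your more direct route suffices.

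One small bookkeeping point in exactly the spot you flagged: when $\Fr < \Ir \leq \Fk$, the claim $(\Ir \bmod \Fk) = \Ir$ fails at the endpoint $\Ir = \Fk$, where $\Ir \bmod \Fk = 0 < \Fr$ and you land in the second branch of Lemma~\ref{lem:bw_lb_II} rather than the first. This does not break tightness, since with $\lfloor \Ir/\Fk\rfloor = 1$ and $\Ir=\Fk$ the second branch evaluates to $\Fl\Fk\alpha\cdot\frac{\Fl\Fr}{(\Fl-1)\Fr+\Fk}$, which equals the construction bound $\Fl\Fr\alpha\cdot\frac{(\Fl-1)\Fk+\Ir}{(\Fl-1)\Fr+\Ir}$ at $\Ir=\Fk$; just handle this endpoint separately.
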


\section{Conclusion}\label{sec:conclusion}
In this work, we established fundamental lower bounds on the bandwidth cost of systematic MDS convertible codes in the split regime. The previously best known bounds were derived under the assumption that a uniform amount of data is downloaded across the data nodes (and similarly for the parity nodes). Further, these bounds were tight for the parameter regime where $\Fr \geq \Fk$ or $\Ir \leq \Fr$. To the contrary, in this work, we derived lower bounds without any such assumption. Moreover, our lower bounds are tight for the broader parameter regime where $\Fr \geq \Fk$ or $\Ir \leq \Fk$. In addition, as a consequence of this work, we partially resolve the conjecture of Maturana and Rashmi in~\cite{maturana2022bandwidth}.

\paragraph*{Open Problems and Future Work}
A natural avenue for future work is to extend the techniques developed in this paper to close the gap between the lower and upper bounds for the parameter regime where $\Ir > \Fk > \Fr$, which would likely necessitate a re-examination of the proof of Lemma~\ref{lem:cond_entropy_UVY}.

A broader avenue of research is to study the bandwidth cost of MDS convertible codes—including both lower bounds and code constructions—for the general regime, which remains largely unexplored. Another direction of future research is to study the bandwidth cost of locally repairable convertible codes using the information-theoretic techniques developed in this work.

\newpage
\bibliographystyle{plain}
\bibliography{ref}

@inproceedings{patterson1988case,
  title={A case for redundant arrays of inexpensive disks (RAID)},
  author={Patterson, David A and Gibson, Garth and Katz, Randy H},
  booktitle={Proceedings of the 1988 ACM SIGMOD international conference on Management of data},
  pages={109--116},
  year={1988}
}

@inproceedings{ghemawat2003google,
  title={The Google file system},
  author={Ghemawat, Sanjay and Gobioff, Howard and Leung, Shun-Tak},
  booktitle={Proceedings of the nineteenth ACM symposium on Operating systems principles},
  pages={29--43},
  year={2003}
}

@inproceedings{huang2012erasure,
  title={Erasure coding in windows azure storage},
  author={Huang, Cheng and Simitci, Huseyin and Xu, Yikang and Ogus, Aaron and Calder, Brad and Gopalan, Parikshit and Li, Jin and Yekhanin, Sergey},
  booktitle={2012 USENIX Annual Technical Conference (USENIX ATC 12)},
  pages={15--26},
  year={2012}
}

@article{singleton1964maximum,
  author    = {R. Singleton},
  title     = {Maximum distance q-nary codes},
  journal   = {IEEE Transactions on Information Theory},
  year      = {1964}
}

@InProceedings{kadekodi2019cluster,
  author    = {Saurabh Kadekodi and K. V. Rashmi and Gregory R. Ganger},
  booktitle = {17th {USENIX} Conference on File and Storage Technologies, {FAST} 2019, Boston, MA, February 25-28, 2019},
  title     = {Cluster storage systems gotta have {HeART}: improving storage efficiency by exploiting disk-reliability heterogeneity},
  year      = {2019},
  editor    = {Arif Merchant and Hakim Weatherspoon},
  pages     = {345--358},
  publisher = {{USENIX} Association}
}

@InProceedings{maturana2020access,
  author    = {Francisco Maturana and V. S. Chaitanya Mukka and K. V. Rashmi},
  booktitle = {{IEEE} International Symposium on Information Theory, {ISIT} 2020, Los Angeles, California, USA, June 21-26, 2020},
  title     = {Access-optimal linear {MDS} convertible codes for all parameters},
  year      = {2020}
}

@Article{maturana2022convertible,
  author    = {Francisco Maturana and K. V. Rashmi},
  journal   = {IEEE Transactions on Information Theory},
  title     = {Convertible codes: enabling efficient conversion of coded data in distributed storage},
  year      = {2022},
  issn      = {1557-9654},
  pages     = {4392--4407},
  volume    = {68},
  date      = {July 2022},
  issue     = {7},
}

@ARTICLE{maturana2023bandwidth,
  author={Maturana, Francisco and Rashmi, K. V.},
  journal={IEEE Transactions on Information Theory}, 
  title={{Bandwidth Cost of Code Conversions in Distributed Storage: Fundamental Limits and Optimal Constructions}}, 
  year={2023},
  volume={69},
  number={8},
  pages={4993-5008}
}

@INPROCEEDINGS{maturana2022bandwidth,
  author={Maturana, Francisco and Rashmi, K. V.},
  booktitle={2022 IEEE International Symposium on Information Theory (ISIT)}, 
  title={{Bandwidth Cost of Code Conversions in the Split Regime}}, 
  year={2022},
  volume={},
  number={},
  pages={3262-3267}}

@inproceedings{kim2024morph,
  title={Morph: Efficient File-Lifetime Redundancy Management for Cluster File Systems},
  author={Kim, Timothy and Athlur, Sanjith and Kadekodi, Saurabh and Maturana, Francisco and Delvira, Dax and Merchant, Arif and Ganger, Gregory R and Rashmi, K. V.},
  booktitle={Proceedings of the ACM SIGOPS 30th Symposium on Operating Systems Principles},
  pages={330--346},
  year={2024}
}

@INPROCEEDINGS{maturana2023LRCC,
  author={Maturana, Francisco and Rashmi, K. V.},
  booktitle={2023 IEEE International Symposium on Information Theory (ISIT)}, 
  title={{Locally Repairable Convertible Codes: Erasure Codes for Efficient Repair and Conversion}}, 
  year={2023},
  volume={},
  number={},
  pages={2033-2038}}

@article{Kong2023LocallyRC,
  author={Kong, Xiangliang},
  journal={IEEE Transactions on Information Theory}, 
  title={Locally Repairable Convertible Codes With Optimal Access Costs}, 
  year={2024},
  volume={70},
  number={9},
  pages={6239-6257},

}

@article{chopra2024low,
  title={On Low Field Size Constructions of Access-Optimal Convertible Codes},
  author={Chopra, Saransh and Maturana, Francisco and Rashmi, K. V.},
  journal={arXiv preprint arXiv:2405.09010},
  year={2024}
}

@article{ge2024mds,
  title={MDS Generalized Convertible Code},
  author={Ge, Songping and Cai, Han and Tang, Xiaohu},
  journal={arXiv preprint arXiv:2407.14304},
  year={2024}
}

@article{ramkumar2025mds,
  title={On MDS Convertible Codes in the Merge Regime},
  author={Ramkumar, Vinayak and Kong, Xiangliang and Sai, G Yeswanth and Vajha, Myna and Krishnan, M Nikhil},
  journal={arXiv preprint arXiv:2508.06219},
  year={2025}
}

@article{shi2025bounds,
  title={Bounds and Optimal Constructions of Generalized Merge-Convertible Codes for Code Conversion into LRCs},
  author={Shi, Haoming and Fang, Weijun and Gao, Yuan},
  journal={arXiv preprint arXiv:2504.09580},
  year={2025}
}

@article{ge2025locally,
  title={Locally Repairable Convertible Codes: Improved Lower Bound and General Construction},
  author={Ge, Songping and Cai, Han and Tang, Xiaohu},
  journal={arXiv preprint arXiv:2504.06734},
  year={2025}
}

@INPROCEEDINGS{CDR2020,
  author={Krishnan, Prasad and Lalitha, V. and Natarajan, Lakshmi},
  booktitle={2020 IEEE International Symposium on Information Theory (ISIT)}, 
  title={Coded Data Rebalancing: Fundamental Limits and Constructions}, 
  year={2020},
  volume={},
  number={},
  pages={640-645}}

@INPROCEEDINGS{CDRDecentralized2020,
  author={Sushena Sree, K V and Krishnan, Prasad},
  booktitle={2020 IEEE Information Theory Workshop (ITW)}, 
  title={Coded Data Rebalancing for Decentralized Distributed Databases}, 
  year={2021},
  volume={},
  number={},
  pages={1-5}
}

@INPROCEEDINGS{CDRCyclic2022,
  author={Chandramouli, Athreya and Vaishya, Abhinav and Krishnan, Prasad},
  booktitle={2022 IEEE Information Theory Workshop (ITW)}, 
  title={Coded Data Rebalancing for Distributed Data Storage Systems with Cyclic Storage}, 
  year={2022},
  volume={},
  number={},
  pages={618-623}}

@article{zhang2010alv,
  author    = {G. Zhang and W. Zheng and J. Shu},
  title     = {{ALV: A new data redistribution approach to RAID-5 scaling}},
  journal   = {IEEE Transactions on Computers},
  volume    = {59},
  number    = {3},
  pages     = {345--357},
  year      = {2010}
}

@inproceedings{zheng2011fastscale,
  author    = {W. Zheng and G. Zhang},
  title     = {{Fastscale: accelerate RAID scaling by minimizing data migration}},
  booktitle = {9th USENIX Conference on File and Storage Technologies (FAST)},
  address   = {San Jose, CA, USA},
  pages     = {149--161},
  year      = {2011}
}

@inproceedings{wu2012gsr,
  author    = {C. Wu and X. He},
  title     = {{GSR: A global stripe-based redistribution approach to accelerate RAID-5 scaling}},
  booktitle = {41st International Conference on Parallel Processing (ICPP)},
  address   = {Pittsburgh, PA, USA},
  pages     = {460--469},
  year      = {2012}
}

@article{zhang2014rethinking,
  author    = {G. Zhang and W. Zheng and K. Li},
  title     = {{Rethinking RAID-5 data layout for better scalability}},
  journal   = {IEEE Transactions on Computers},
  volume    = {63},
  number    = {11},
  pages     = {2816--2828},
  year      = {2014}
}

@article{huang2015scaleRS,
  author    = {J. Huang and X. Liang and X. Qin and P. Xie and C. Xie},
  title     = {{Scale-RS: An efficient scaling scheme for RS-coded storage clusters}},
  journal   = {IEEE Transactions on Parallel and Distributed Systems},
  volume    = {26},
  number    = {6},
  pages     = {1704--1717},
  year      = {2015}
}

@article{wu2016ioefficient,
  author    = {S. Wu and Y. Xu and Y. Li and Z. Yang},
  title     = {{I/O-efficient scaling schemes for distributed storage systems with CRS codes}},
  journal   = {IEEE Transactions on Parallel and Distributed Systems},
  volume    = {27},
  number    = {9},
  pages     = {2639--2652},
  year      = {2016}
}

@inproceedings{zhang2018optimal,
  author    = {X. Zhang and Y. Hu and P. P. C. Lee and P. Zhou},
  title     = {{Toward optimal storage scaling via network coding: from theory to practice}},
  booktitle = {IEEE INFOCOM 2018},
  address   = {Honolulu, HI, USA},
  pages     = {1808--1816},
  year      = {2018}
}

@inproceedings{hu2018generalized,
  author    = {Y. Hu and X. Zhang and P. P. C. Lee and P. Zhou},
  title     = {{Generalized optimal storage scaling via network coding}},
  booktitle = {IEEE International Symposium on Information Theory (ISIT)},
  address   = {Vail, CO, USA},
  pages     = {956--960},
  year      = {2018}
}

@article{zhang2020efficient,
  author    = {X. Zhang and Y. Hu},
  title     = {{Efficient storage scaling for MBR and MSR codes}},
  journal   = {IEEE Access},
  volume    = {8},
  pages     = {78992--79002},
  year      = {2020}
}

@inproceedings{rai2015adaptive,
  author    = {B. K. Rai and V. Dhoorjati and L. Saini and A. K. Jha},
  title     = {{On adaptive distributed storage systems}},
  booktitle = {IEEE International Symposium on Information Theory (ISIT)},
  address   = {Hong Kong, China},
  pages     = {1482--1486},
  year      = {2015}
}

@inproceedings{rai2015adaptive2,
  author    = {B. K. Rai},
  title     = {{On adaptive (functional MSR code based) distributed storage systems}},
  booktitle = {International Symposium on Network Coding (NetCod)},
  address   = {Sydney, Australia},
  pages     = {46--50},
  year      = {2015}
}

@inproceedings{wu2020optimal,
  author    = {S. Wu and Z. Shen and P. P. C. Lee},
  title     = {{On the optimal repair-scaling trade-off in locally repairable codes}},
  booktitle = {IEEE Conference on Computer Communications (INFOCOM)},
  year      = {2020}
}

@inproceedings{hu2021combinedlocality,
  author    = {Y. Hu and L. Cheng and Q. Yao and P. P. C. Lee and W. Wang and W. Chen},
  title     = {{Exploiting combined locality for wide-stripe erasure coding in distributed storage}},
  booktitle = {19th USENIX Conference on File and Storage Technologies (FAST)},
  pages     = {233--248},
  year      = {2021}
}

@article{wu2022optimaltradeoff,
  author    = {S. Wu and Z. Shen and P. P. C. Lee and Y. Xu},
  title     = {{Optimal repair-scaling trade-off in locally repairable codes: analysis and evaluation}},
  journal   = {IEEE Transactions on Parallel and Distributed Systems},
  volume    = {33},
  pages     = {56--69},
  year      = {2022}
}

@inproceedings{wu2022placement,
  author={Wu, Si and Du, Qingpeng and Lee, Patrick P. C. and Li, Yongkun and Xu, Yinlong},
  booktitle={IEEE INFOCOM 2022 - IEEE Conference on Computer Communications}, 
  title={Optimal Data Placement for Stripe Merging in Locally Repairable Codes}, 
  year={2022},
  volume={},
  number={},
  pages={1669-1678},
}

@INPROCEEDINGS{Justin2025,
  author={Zhang, Justin and Rashmi, K.V.},
  booktitle={2025 IEEE International Symposium on Information Theory (ISIT)}, 
  title={Secure Convertible Codes for Passive Eavesdroppers}, 
  year={2025},
  volume={},
  number={},
  pages={1-6}}

@article{shah2011distributed,
  title={Distributed storage codes with repair-by-transfer and nonachievability of interior points on the storage-bandwidth tradeoff},
  author={Shah, Nihar B and Rashmi, K Vinayak and Kumar, P Vijay and Ramchandran, Kannan},
  journal={IEEE Transactions on Information Theory},
  volume={58},
  number={3},
  pages={1837--1852},
  year={2011},
  publisher={IEEE}
}

\newpage
\appendix
\MDSSplitLb*
\begin{proof} 
Since the lower bounds from Lemma~\ref{lem:bw_lb_trivial} and  Lemma~\ref{lem:bw_lb_I} are tight when 
$\Fr \geq \Fk$ and $\Ir \leq \Fr < \Fk$, respectively, we restrict our focus to the regime $\Fr < \Ir$ and $\Fr < \Fk$. Let 
\begin{align*}
    \cL_1 &:= \begin{cases}
        \Fl \Fr \alpha \frac{\left(\Fl - \ceil{\frac{\Ir}{\Fk}}\right)\Fk + \Ir}{\left(\Fl - \ceil{\frac{\Ir}{\Fk}}\right)\Fr + \Ir} & \text{if}~ (\Ir \bmod \Fk) \geq \Fr,\;\Fr < \Ir \leq \Ik,\;\Fr < \Fk\;,\\
        \Fl\Fk\alpha - \Fl\Ir\alpha\frac{\Fk - \Fr}{\left(\Fl-\floor{\frac{\Ir}{\Fk}}\right)\Fr + \floor{\frac{\Ir}{\Fk}}\Fk} & \text{if}~ (\Ir \bmod \Fk) < \Fr,\;\Fr < \Ir \leq \Ik,\;\Fr < \Fk\;,
        \end{cases}\\
    \cL_2 &:= \Fl\Fk\alpha-\min\{\Ir,\Ik\}\alpha\left(\frac{\Fk}{\Fr}-1\right) \;,\\
    \cL_3 &:= \Fl\min\{\Fk,\Fr\}\alpha\;.
\end{align*}

Then, from Lemma~\ref{lem:bw_lb_trivial}, Lemma~\ref{lem:bw_lb_I}, and Lemma~\ref{lem:bw_lb_II}, it follows that
\begin{align*}
    \gamma_r \geq \begin{cases}
        \max\{\cL_1, \cL_2, \cL_3\} & \text{if}~ \Fr < \Ir \leq \Ik, \Fr < \Fk\;,\\
        \max\{\cL_2,\cL_3\} & \text{if}~ \Ik < \Ir, \Fr < \Fk\;.
    \end{cases}
\end{align*}

\noindent
\textbf{Case 1:} $\Fr < \Ir \leq \Ik$ and $\Fr < \Fk$.

\noindent Then, $\min\{\Ir,\Ik\}=\Ir$, so
\begin{align*}
\cL_2 = \Fl\Fk\alpha - \Ir\alpha\left(\frac{\Fk}{\Fr} - 1\right)\;.
\end{align*}
Also, $\cL_3 = \Fl\Fr\alpha$ since $\Fr < \Fk$.

\noindent
\underline{Subcase 1a:} $(\Ir \bmod \Fk) \geq \Fr$. Then, 
\begin{align*}
    \Ir &\geq \floor{\frac{\Ir}{\Fk}}\Fk + \Fr \\
    &= \left(\ceil{\frac{\Ir}{\Fk}}-1\right)\Fk + \Fr \tag{Since $(\Ir \bmod \Fk) > 0$} \\
    &> \ceil{\frac{\Ir}{\Fk}} \Fr\;. \tag{Since $\Fr < \Fk$}
\end{align*}

Since $(\Ir \bmod \Fk) \geq \Fr$, 
\[
\cL_1 = \Fl\Fr\alpha 
\frac{(\Fl - \ceil{\frac{\Ir}{\Fk}})\Fk + \Ir}
     {(\Fl - \ceil{\frac{\Ir}{\Fk}})\Fr + \Ir}\;.
\]
Since $\Ir \leq \Ik$ and $\Fr < \Fk$, 
\[
\cL_1 = \Fl\Fr\alpha 
\frac{(\Fl - \ceil{\frac{\Ir}{\Fk}})\Fk + \Ir}
     {(\Fl - \ceil{\frac{\Ir}{\Fk}})\Fr + \Ir} \geq \Fl\Fr\alpha = \cL_3.
\]
Moreover,
\begin{align*}
    \cL_1 &= \Fl\Fr\alpha \frac{(\Fl - \ceil{\frac{\Ir}{\Fk}})\Fk + \Ir}
     {(\Fl - \ceil{\frac{\Ir}{\Fk}})\Fr + \Ir}\\
&= \Fl\Fr\alpha + \Fl\Fr\alpha\frac{(\Fk-\Fr)(\Fl-\ceil{\frac{\Ir}{\Fk}})}
             {(\Fl-\ceil{\frac{\Ir}{\Fk}})\Fr+\Ir}\\
&>\Fl\Fr\alpha + \Fl\Fr\alpha\frac{(\Fk-\Fr)(\Fl- \frac{\Ir}{\Fr})}
             {(\Fl-\ceil{\frac{\Ir}{\Fk}})\Fr+\ceil{\frac{\Ir}{\Fk}}\Fr}\tag{Since $\Ir > \ceil{\frac{\Ir}{\Fk}} \Fr$ }\\
&= \Fl\Fr\alpha + \alpha(\Fk-\Fr)\left(\Fl- \frac{\Ir}{\Fr}\right)\\
&= \Fl\Fk\alpha - \Ir\alpha\left(\frac{\Fk}{\Fr} - 1\right) = \cL_2\;.
\end{align*}
Hence, $\max\{\cL_1, \cL_2, \cL_3\}=\cL_1$.

\noindent
\underline{Subcase 1b:} $(\Ir \bmod \Fk) < \Fr$. 
 Then, 
 \begin{align*}
     \Ir < \Fr + \floor{\frac{\Ir}{\Fk}}\Fk\;.
 \end{align*}
Further, since $\Ir \leq \Ik$, it follows that
  \begin{align}\label{eq:thm_rI_ub}
     \Ir \leq \left(\Fl - \floor{\frac{\Ir}{\Fk}}\right)\Fr + \floor{\frac{\Ir}{\Fk}}\Fk\;.  
 \end{align}

Since $(\Ir \bmod \Fk) \geq \Fr$, 
\begin{align*}
\cL_1&=\Fl\Fk\alpha - \Fl\Ir\alpha\frac{\Fk - \Fr}{\left(\Fl-\floor{\frac{\Ir}{\Fk}}\right)\Fr + \floor{\frac{\Ir}{\Fk}}\Fk} \\
&>\Fl\Fk\alpha - \Fl\Ir\alpha\frac{\Fk - \Fr}{\left(\Fl-\floor{\frac{\Ir}{\Fk}}\right)\Fr + \floor{\frac{\Ir}{\Fk}}\Fr} \tag{Since $\Fr < \Fk$}\\
& = \Fl\Fk\alpha - \Ir\alpha\left(\frac{\Fk}{\Fr} - 1\right) = \cL_2\;.
\end{align*}
Hence $\cL_1 > \cL_2$. Moreover,
\begin{align*}
\cL_1&=\Fl\Fk\alpha - \Fl\Ir\alpha\frac{\Fk - \Fr}{\left(\Fl-\floor{\frac{\Ir}{\Fk}}\right)\Fr + \floor{\frac{\Ir}{\Fk}}\Fk} \\
&= \Fl(\Fk-\Fr)\alpha\left(1-\frac{\Ir}{\left(\Fl - \floor{\frac{\Ir}{\Fk}}\right)\Fr + \floor{\frac{\Ir}{\Fk}}\Fk}\right) + \Fl\Fr\alpha\\
&\geq \Fl\Fr\alpha \tag{Inequality~\eqref{eq:thm_rI_ub}} = \cL_3\;.
\end{align*}

Hence, $\max\{\cL_1, \cL_2, \cL_3\}=\cL_1$.

\noindent
\textbf{Case 2:} $\Ik < \Ir$ and $\Fr < \Fk$.

\noindent
Then, $\min\{\Ir,\Ik\}=\Ik$, so
\begin{align*}   
\cL_2 &= \Fl\Fk\alpha - \Ik\alpha\frac{\Fk-\Fr}{\Fr}\\
&= \Fl\Fr\alpha + \left(\Fl-\frac{\Ik}{\Fr}\right)(\Fk-\Fr)\alpha\\
&< \Fl\Fr\alpha + \left(\Fl-\frac{\Fl\Fk}{\Fk}\right)(\Fk-\Fr)\alpha \tag{Since $\Ik = \Fl\Fk$ and $\Fr < \Fk$} \\
&= \Fl\Fr\alpha = \cL_3\;.
\end{align*}

Hence, $\max\{\cL_2, \cL_3, \cL_3\}=\cL_3$.

\noindent
Combining both cases completes the proof.
\end{proof}

\end{document}